\newcommand{\be}{\begin{equation}}
\newcommand{\ee}{\end{equation}}
\newcommand{\bD}{\mathbb{D}}
\newcommand{\bN}{\mathbb{N}}
\newcommand{\bR}{\mathbb{R}}
\newcommand{\bZ}{\mathbb{Z}}
\newcommand{\bW}{\mathbb{W}}
\newcommand{\cA}{\mathcal{A}}
\newcommand{\cB}{\mathcal{B}}
\newcommand{\cC}{\mathcal{C}}
\newcommand{\cD}{\mathcal{D}}
\newcommand{\cH}{\mathcal{H}}
\newcommand{\cL}{\mathcal{L}}
\newcommand{\cO}{\mathcal{O}}
\newcommand{\cR}{\mathcal{R}}
\newcommand{\cS}{\mathcal{S}}
\newcommand{\cT}{\mathcal{T}}
\newcommand{\cY}{\mathcal{Y}}
\newcommand{\mfT}{\mathfrak{T}}
\newcommand{\mfX}{\mathfrak{X}}
\newcommand{\bszero}{{\boldsymbol{0}}}
\newcommand{\bsone}{{\boldsymbol{1}}}
\newcommand{\bsX}{{\boldsymbol{X}}}
\newcommand{\bsa}{{\boldsymbol{a}}}
\newcommand{\bsb}{{\boldsymbol{b}}}
\newcommand{\bschi}{{\boldsymbol{\chi}}}
\newcommand{\bskappa}{{\boldsymbol{\kappa}}}
\newcommand{\ri}{\mathrm{i}}
\newcommand{\dd}{\mathrm{d}}
\newcommand{\diag}{\mathrm{diag}}
\newcommand{\tr}{\mathrm{tr}}
\newcommand{\half}{{\frac{1}{2}}}
\newcommand{\halfbeta}{{\frac{\beta}{2}}}
\newcommand{\quarter}{{\frac{1}{4}}}
\newcommand{\quarterbeta}{{\frac{\beta}{4}}}
\newcommand{\PB}[2]{ \{ #1, #2 \} }
\newcommand{\OD}[2]{\frac{\dd #1}{\dd #2}}
\newcommand{\PD}[2]{\frac{\partial #1}{\partial #2}}
\newcommand{\SET}[2]{ \{\, #1 \, \mid \, #2 \,\}}
\newcommand{\BR}[1]{ \left( #1 \right) }
\newcommand{\ABS}[1]{ \left\vert #1 \right\vert }
\newcommand{\BLOCKMAT}[4]{\begin{bmatrix} #1 & #2 \\ #3 & #4 \end{bmatrix}}
\newcommand{\COMM}[2]{\left[ #1, #2 \right]}
\newcommand{\midand}{\quad \text{and} \quad}
\numberwithin{equation}{section}
\theoremstyle{plain}
\newtheorem{THEOREM}{Theorem}[section]
\newtheorem{LEMMA}[THEOREM]{Lemma}
\newtheorem{PROPOSITION}[THEOREM]{Proposition}
\newtheorem{COROLLARY}[THEOREM]{Corollary}
\begin{document}

\title[]{Lax matrices for a $1$-parameter subfamily of van~Diejen--Toda chains}

\author[]{B\'ela G\'abor Pusztai}

\address{
    Interdisciplinary Excellence Centre,
    Bolyai Institute, 
    University of Szeged \\
    Aradi v\'ertan\'uk tere 1, 
    H-6720 Szeged, Hungary
}

\email[]{gpusztai@math.u-szeged.hu}

\keywords{Lax matrices, relativistic Toda chains, van Diejen systems}

\subjclass[2010]{70H06}

\begin{abstract}
In this paper, we construct Lax matrices for certain relativistic open 
Toda chains endowed with a one-sided $1$-parameter boundary interaction. 
Built upon the Lax representation of the dynamics, an algebraic solution 
algorithm is also exhibited. To our best knowledge, this particular 
$1$-parameter subfamily of van Diejen--Toda chains has not been analyzed 
in earlier literature.
\end{abstract}

\maketitle
\tableofcontents

\section{Introduction}\label{SECTION:Intro}
As is known, the relativistic story of the Toda systems began with the 
seminal work of Ruijsenaars \cite{R90}, and has received lots of attention 
from the outset \cite{BR88, BR89, S90, S91, S96, S97, S03, S18}. For a 
comprehensive review of the early developments we recommend \cite{KMZ} 
and references therein. From our perspective we must pay particular 
attention to the work \cite{S90}, since this is the first paper that 
introduced non-trivial integrable deformations of the relativistic 
Toda systems. The deformed systems of type-I defined in \cite{S90} are 
characterized by $4$ parameters, the type-II lattices contain $2$ 
parameters, whereas the type-III deformations contain no couplings. The 
next major step in this direction is due to van Diejen, who derived 
integrable deformations of the relativistic Toda chains with $9$ coupling 
constants \cite{D94}, including Suris' deformed systems as special cases. 
Since the appearance of these integrable deformations we have been 
witnessing a continuous development at the quantum level 
\cite{D95, KT, E, S, C, DE}. 

However, at the classical level the theory of the deformed relativistic 
Toda lattices still appears to be in its infancy. The only reasonable 
explanation of this situation is the lack of Lax representation of the 
dynamics for the most generic deformations. Indeed, to our best knowledge, 
so far only two papers have addressed the construction of Lax matrices for 
certain special variants of the deformed models. Essentially by a folding 
procedure, Ruijsenaars \cite{R90} derived $C$-type and $BC$-type models 
from the translational invariant open lattices associated with the $A$-type 
root systems. Though Ruijsenaars' folded models naturally inherit Lax pairs 
from the translational invariant lattices, they contain no (essential) 
coupling parameters. The most satisfactory results in this direction can be 
found in \cite{S90}, where spectral parameter dependent $2 \times 2$ Lax 
matrices are provided for Suris' type-I/II/III deformations. Furthermore, 
the underlying $r$-matrix structure is also exhibited. Still, there is a 
huge gap between the $4$-parameter family of type-I systems and the 
$9$-parameter family of the van Diejen--Toda chains. 

Although the full picture with the maximal number of coupling constants 
is still out of reach, in this paper we get one step closer to the solution
of this long-standing open problem by providing a detailed analysis for a 
particular $1$-parameter subfamily of van Diejen--Toda chains, that has 
not been studied earlier. In order to describe these chains, take an 
arbitrary integer $n \geq 3$ and consider the phase space
\be\label{P}
    P = \SET{\zeta = (\xi, \eta)}
            {\xi_1, \ldots, \xi_n, \eta_1, \ldots, \eta_n \in \bR} 
    = \bR^{2 n}
\ee
equipped with its standard smooth manifold structure. Also, we introduce 
a global coordinate system on it by the family of functions 
\be\label{q&theta}
    q_a(\zeta) = \xi_a
    \midand
    \theta_a(\zeta) = \eta_a
    \qquad
    (\zeta = (\xi, \eta) \in P, \; 1 \leq a \leq n).
\ee
As customary in the theory of the aforementioned relativistic integrable 
many-body systems, the coordinates $q_a$ and $\theta_a$ are called the 
particle positions and the particle rapidities, respectively. Regarding 
$P$ as a model of the cotangent bundle of $\bR^n$, it naturally carries 
the symplectic form
\be\label{omega}
    \omega = \sum_{c = 1}^n \dd q_c \wedge \dd \theta_c,
\ee
and the corresponding Poisson bracket reads
\be\label{PB}
    \PB{f}{h} 
    = \sum_{c = 1}^n 
        \BR{\PD{f}{q_c} \PD{h}{\theta_c} - \PD{f}{\theta_c} \PD{h}{q_c}}
    \qquad
    (f, h \in C^\infty(P)).
\ee

As concerns the inter-particle interaction of our integrable chains, it 
proves handy to introduce the single variable smooth function
\be\label{f_mu}
    f_\mu \colon \bR \rightarrow (0, \infty),
    \quad
    r \mapsto f_\mu(r) = \sqrt{1 + \mu^2 e^{-r}},
\ee
where $\mu$ is a real parameter. In passing we remark that for $\mu \neq 0$ 
we have
\be\label{f_mu_ineq}
    f_\mu(r) > 1.
\ee
Now, let $\beta \in (0, \infty)$ and $\kappa \in \bR$ be arbitrary constants
and consider the Hamiltonian
\be\label{H}
\begin{split}
    H_\kappa = 
    & \cosh(\beta \theta_1) f_\beta(q_1 - q_2)
        + \sum_{c = 2}^{n - 1} 
            \cosh(\beta \theta_c) 
            f_\beta(q_{c - 1} - q_c) f_\beta(q_c - q_{c + 1}) \\
    & +\cosh(\beta \theta_n) 
        f_\beta(q_{n - 1} - q_n) f_\beta(2 q_n) 
        f_{\kappa \beta}(2 q_n)
        + \kappa \beta^2 e^{-2 q_n}
        + \half \kappa \beta^4 e^{-(q_{n - 1} + q_n)}.
\end{split}
\ee
Since we shall keep fixed the `inverse speed of light' $\beta$ throughout 
the paper, it should not cause confusion that we suppress the dependence 
of the Hamiltonian on this parameter. However, parameter $\kappa$ does 
play a decisive role in our discussion. In this respect our first trivial 
observation is that for $\kappa < 0$ the Hamiltonian $H_\kappa$ is 
\emph{not} bounded from below. As numerical experiments show, in this 
case even the completeness of the Hamiltonian flow is questionable, and 
so the techniques we wish to present in this paper would require serious 
modifications. Therefore, from now on we \emph{assume}
\be\label{kappa_nonnegative}
    \kappa \geq 0.
\ee
Recalling \eqref{f_mu_ineq}, it is plain that this condition entails 
the lower estimate $H_\kappa > n$. It is also clear that the classical 
mechanical system $(P, \omega, H_\kappa)$ belongs to the class of 
deformed relativistic Toda systems introduced by van Diejen \cite{D94}. 
In order to clarify this connection, let us call to mind the Hamiltonian 
given in equation (37) of \cite{D94}, which depends on nine (complex) 
parameters: $g, g_0, g_0', g_1, g_1', k_0, k_0', k_1, k_1'$. Now, one 
can easily verify that by setting
\begin{align}
    & g = \beta, 
        \quad 
        g_0 = g_0' = g_1 = g_1' = 0, \\
    & k_0 = e^{\ri \pi/4} \sqrt{\beta},
        \quad
        k_0' = e^{-\ri \pi/4} \sqrt{\beta}, 
        \quad
        k_1 = e^{\ri \pi/4} \sqrt{\kappa \beta},
        \quad
        k_1' = e^{-\ri \pi/4} \sqrt{\kappa \beta},
\end{align}
van Diejen's multi-parametric Hamiltonian reduces to $\cR^* H_\kappa$, 
where $\cR$ is the canonical transformation
\be\label{cR}
    \cR \colon P \rightarrow P,
    \quad
    (\xi, \eta) \mapsto (-\xi, -\eta).
\ee
The appearance of the reflection map $\cR$ is due to the exponent $-r$ 
in the definition of function $f_\mu$~\eqref{f_mu}. Indeed, the majority 
of the papers, including \cite{D94}, apply the plus sign convention in 
the exponent. In any case, we see that $H_\kappa$~\eqref{H} describes 
an open relativistic Toda chain endowed with a one-sided $1$-parameter 
boundary interaction.

Having identified our models in the family of the van Diejen--Toda chains,
we are now in a position to make a comparison with the deformed models
appearing in \cite{R90, S90}. Looking at equation (6.24) in \cite{R90},
from \eqref{H} it is straightforward to see that in the special case 
$\kappa = 0$ the composition $\cR^* H_0$ reproduces Ruijsenaars' $C_n$-type 
Hamiltonian. On the other hand, the comparison with the deformed models 
defined in \cite{S90} is a bit more subtle. Nevertheless, the type-III 
lattices can be ruled out immediately, since they have boundary interactions 
on both sides. By inspecting the type-I/II deformations, one may 
recognize immediately that only the type-I lattices bear a resemblance to 
our models $H_\kappa$~\eqref{H}, but exact identification can be made 
only in the special case $\kappa = 0$. Indeed, $H_0$ corresponds to Suris' 
$C_n$-type lattice characterized by the parameter set 
$(a_1, b_1, a_n, b_n) = (0, 0, 0, 1)$ as given below equation (6) in 
\cite{S90}. Incidentally, this particular lattice coincides with
Ruijsenaars' $C_n$-type model. To sum up, it is safe to say that Lax 
matrices related to our chains $H_\kappa$~\eqref{H} have appeared in the 
literature only for $\kappa = 0$, and so the real novelty of our paper 
is the construction of Lax representation to the $\kappa > 0$ case. In 
passing we mention that at a more appropriate place of the paper, in 
subsection \ref{SUBSECTION:new_variables}, we perform a detailed comparison 
with Suris' deformed models.

Though the dependence of $H_\kappa$~\eqref{H} on $\kappa$ is clearly visible, 
we still have to convince ourselves that different values of the parameter 
lead to `essentially different' Hamiltonians. Being cautious is not without 
reasons: it has been observed even at the level of the non-relativistic Toda 
systems that by shifting the particle positions, one can introduce `fake' 
coupling constants into the models. To illustrate this phenomenon in the 
context of our systems, for any 
$\bschi = (\chi_1, \ldots, \chi_n) \in (0, \infty)^n$ define the canonical 
transformation
\be\label{mfT}
    \mfT_\bschi \colon P \rightarrow P,
    \quad
    (\xi_1, \ldots, \xi_n, \eta_1, \ldots, \eta_n)
    \mapsto
    (\xi_1 + \ln(\chi_1), \ldots, \xi_n + \ln(\chi_n), 
        \eta_1, \ldots, \eta_n).
\ee
Notice that by composing $H_\kappa$~\eqref{H} with $\mfT_\bschi$ we obtain
\be\label{H&mfT}
\begin{split}
    \mfT_\bschi^* H_\kappa = 
    & \cosh(\beta \theta_1) f_{\beta (\chi_2 / \chi_1)^\half}(q_1 - q_2) \\
    & + \sum_{c = 2}^{n - 1} 
            \cosh(\beta \theta_c) 
            f_{\beta (\chi_c / \chi_{c - 1})^\half}(q_{c - 1} - q_c) 
            f_{\beta (\chi_{c + 1} / \chi_c)^\half}(q_c - q_{c + 1}) \\
    & + \cosh(\beta \theta_n) 
            f_{\beta (\chi_n / \chi_{n - 1})^\half}(q_{n - 1} - q_n) 
            f_{\beta / \chi_n}(2 q_n) 
            f_{\kappa \beta / \chi_n}(2 q_n) \\
    & + \kappa \beta^2 \chi_n^{-2} e^{-2 q_n}
        + \half \kappa \beta^4 \chi_{n - 1}^{-1} \chi_n^{-1} 
            e^{-(q_{n - 1} + q_n)},
\end{split}
\ee
leading to the proliferation of the coupling constants in a trivial 
manner. Working backward, this observation can also be utilized to 
investigate whether parameter $\kappa$ can be eliminated: given 
$\kappa_1, \kappa_2 \in [0, \infty)$, we say that Hamiltonians 
$H_{\kappa_1}$ and $H_{\kappa_2}$ are \emph{equivalent up to change 
of couplings}, in notation $H_{\kappa_1} \sim H_{\kappa_2}$, if there 
is an $n$-tuple $\bschi \in (0, \infty)^n$ such that 
$H_{\kappa_2} = \mfT_\bschi^* H_{\kappa_1}$. It is evident that $\sim$
is an equivalence relation on the set of our distinguished Hamiltonians
\eqref{H}. Exploiting the explicit formula \eqref{H&mfT} one can also 
verify that
\be\label{param_equiv}
    H_{\kappa_1} \sim H_{\kappa_2}
    \text{ \emph{if and only if} 
            $\kappa_1 = \kappa_2$ or $\kappa_1 \kappa_2 = 1$.}
\ee
This characterization of the equivalence may seem a bit strange at 
first. However, if $\kappa > 0$, then with the special $n$-tuple 
$\bskappa = (\kappa, \ldots, \kappa)$ from \eqref{H&mfT} we do get
\be
    \mfT_\bskappa^* H_\kappa = H_{1 / \kappa}, \text{ and so } 
    H_{\kappa} \sim H_{1 / \kappa}.
\ee
The content of the above line can be interpreted by saying that the small
coupling and the strong coupling regimes are in duality. Utilizing this
duality, we could safely assume that $0 \leq \kappa \leq 1$. Moreover, 
if $\kappa_1, \kappa_2 \in [0, 1]$ and $\kappa_1 \neq \kappa_1$, then by 
\eqref{param_equiv} we have $H_{\kappa_1} \nsim H_{\kappa_2}$, implying 
that we cannot get rid of parameter $\kappa$ from $H_\kappa$~\eqref{H} by 
shifting the particle coordinates. In this sense, $\kappa$ turns out to 
be an essential parameter.

In the light of the above discussion, one may wonder why we incorporate 
the inverse speed of light into the couplings of the Hamiltonian 
$H_\kappa$~\eqref{H}. Indeed, with the special $n$-tuple 
$\bschi_\beta = (\beta^{2 n - 1}, \ldots, \beta^3, \beta)$, from 
\eqref{H&mfT} one sees immediately that
\be
\begin{split}
    \mfT_{\bschi_\beta}^* H_\kappa =
    & \cosh(\beta \theta_1) f_1(q_1 - q_2)
        + \sum_{c = 2}^{n - 1} 
            \cosh(\beta \theta_c) 
            f_1(q_{c - 1} - q_c) f_1(q_c - q_{c + 1}) \\
    & +\cosh(\beta \theta_n) 
        f_1(q_{n - 1} - q_n) f_1(2 q_n) f_{\kappa}(2 q_n)
        + \kappa e^{-2 q_n} + \half \kappa e^{-(q_{n - 1} + q_n)},
\end{split}
\ee
and so $\beta$ disappears from the expressions governing the particle 
interactions. In spite of this apparent simplification, we keep $\beta$ 
for two reasons. First, it makes easier to control the transition from 
the relativistic models to their non-relativistic counterparts. 
Specifically, by expanding $H_\kappa$~\eqref{H} in $\beta$, one finds 
\be
    H_\kappa = n + \beta^2 H_\kappa^{\text{n.r.}} + \cO(\beta^4),
\ee
where
\be\label{H_n.r._kappa}
    H_\kappa^{\text{n.r.}} 
    = \half \sum_{c = 1}^n \theta_c^2
        + \sum_{c = 1}^{n - 1} e^{-(q_c - q_{c + 1})} 
        + \half (1 + \kappa)^2 e^{-2 q_n}
\ee
can be identified with the Hamiltonian of the $C_n$-type non-relativistic
open Toda lattice. Note that with the special $n$-tuple
$\bschi_\kappa = (1 + \kappa, \ldots, 1 + \kappa)$ the shift defined in 
\eqref{mfT} yields
\be\label{H_n.r.}
    \mfT_{\bschi_\kappa}^* H_\kappa^{\text{n.r.}} 
    = \half \sum_{c = 1}^n \theta_c^2
        + \sum_{c = 1}^{n - 1} e^{-(q_c - q_{c + 1})} 
        + \half e^{-2 q_n},
\ee
thus parameter $\kappa$ can be eliminated from \eqref{H_n.r._kappa}, as
expected. So, contrary to $H_\kappa$~\eqref{H}, the coupling parameter in
$H^\text{n.r.}_\kappa$ is purely artificial. In this sense the one-sided 
$1$-parameter boundary interaction given in $H_\kappa$~\eqref{H} has no 
non-trivial footprint on the non-relativistic level. Admittedly, the second 
reason for keeping $\beta$ has its roots in wishful thinking. In \cite{S90} 
Suris observed a remarkable relationship between certain discrete time 
generalized Toda lattices and Ruijsenaars' relativistic Toda lattices. An
essential ingredient of this picture is the correspondence between the 
discrete time step-size and the inverse speed of light. Thus, by emphasizing 
the role of $\beta$ in our paper, our hope is that it may 
facilitate to find a \emph{discrete time} generalized Toda lattice 
interpretation of the deformed relativistic Toda model $H_\kappa$~\eqref{H}.

Having described the integrable systems of our interest, now we wish to 
briefly outline the content of the paper. Since $H_\kappa$~\eqref{H} is 
not a textbook Hamiltonian, in Section~\ref{SECTION:Analyzing_the_dynamics} 
we investigate the dynamics, with particular emphasis on the completeness
of the Hamiltonian flow. Also, motivated by the work of Suris \cite{S90}, 
in this section we write down the Hamiltonian equations of motion in a 
distinguished set of canonical variables. Incidentally, in this Darboux 
system the relationship between a particular instance of Suris' generalized 
Toda lattice of type-I and the deformed relativistic Toda model 
$H_\kappa$~\eqref{H} with $\kappa = 0$ also becomes transparent, as 
discussed below Proposition~\ref{LEMMA:H}. The ultimate goal of 
Section~\ref{SECTION:Lax matrices} is to construct Lax matrices for the 
dynamics generated by $H_\kappa$~\eqref{H}. First, as formulated in 
Theorem~\ref{THEOREM:Lax_triad}, we set up a so-called `Lax triad' for 
the dynamics, which can be seen as a weaker form of the Lax equation. 
Built upon this intermediate step, an honest Lax representation of the 
dynamics also emerges, as summarized in Theorem~\ref{THEOREM:Lax_eqn}. 
The members of the proposed Lax pair $(\cL, \cA)$ are defined in \eqref{cL} 
and \eqref{cA}, respectively. It is worth mentioning that in the most 
interesting (new) cases, that is for $\kappa > 0$, the Lax matrix $\cL$ 
has lower bandwidth $2$, whereas $\cA$ is pentadiagonal. By exploring 
further the relationships among $H_\kappa$, $\cL$ and $\cA$, in 
Section~\ref{SECTION:Solution_algo} we provide an algebraic solution 
algorithm for the Hamiltonian dynamics generated by $H_\kappa$. As 
can be seen in Theorem~\ref{THEOREM:Solution_algo}, and in the subsequent 
discussion, the time evolution of the particle coordinates can be recovered 
from the $LDU$ factorization of certain exponential matrix flow constructed 
with the aid of the Lax matrix $\cL$. Finally, in Section
\ref{SECTION:Discussion} we discuss some open problems related to the
dynamical system \eqref{H}.

\section{Analyzing the dynamics}\label{SECTION:Analyzing_the_dynamics}
In the first half of this section we address the issue of completeness 
of the Hamiltonian flow generated by $H_\kappa$~\eqref{H}. Our analysis 
hinges on a time reversal argument, which is a well-known technique to 
the experts of the area. Nevertheless, since completeness plays a crucial 
role in our investigations, we present this material, too, in a concise 
manner.

In the second half of the section we introduce a distinguished set of 
canonical variables, that will pave the way to the construction of Lax 
matrices. Let us also note that starting from this section we shall keep 
the non-negative parameter $\kappa$ fixed, and shall apply the shorthand 
notation $H = H_\kappa$.

Finally, a further piece of notation: with any strictly positive integer 
$m \in \bN$ we associate the finite subset
\be\label{bN_m}
    \bN_m = \{ \, 1, \ldots, m \, \} \subseteq \bN.
\ee

\subsection{Completeness of the flow}\label{SUBSECTION:Completeness}
Recalling the phase space $P$~\eqref{P}, consider the smooth map
\be\label{T}
    T \colon P \rightarrow P,
    \quad
    (\xi, \eta) \mapsto (\xi, -\eta). 
\ee
Since $T$ is an involution, it is automatically invertible with inverse
$T^{-1} = T$. Remembering the standard coordinates \eqref{q&theta}, it is 
evident that
\be\label{T^*}
    T^* q_a = q_a
    \midand
    T^* \theta_a = -\theta_a
    \qquad
    (a \in \bN_n).
\ee
To put it simple, $T$ reverses the rapidities. Since the $\cosh$ function 
is even, the Hamiltonian $H$~\eqref{H} is invariant under the reversal of 
the rapidities; that is, $T^* H = H$. Notice also that for the pullback of 
the symplectic form \eqref{omega} by $T$ we get 
\be\label{T&omega}
    T^* \omega = -\omega,
\ee
meaning that $T$ is actually an anti-symplectomorphism. 

To proceed, with the aid of the Poisson bracket \eqref{PB} we also introduce
the Hamiltonian vector field $\bsX_H \in \mfX(P)$ corresponding to our 
distinguished Hamiltonian function $H$~\eqref{H}. Namely, for its action 
on the family of smooth functions we employ the usual convention
\be\label{bsX_H}
    \bsX_H[f] = \PB{f}{H}
    \qquad
    (f \in C^\infty(P)).
\ee
Taking an arbitrary point $\zeta \in P$, let
\be\label{gamma_zeta}
    \gamma_\zeta \colon (\bsa_\zeta, \bsb_\zeta) \rightarrow P,
    \quad
    t \mapsto \gamma_\zeta(t)
\ee
be the (unique) maximally defined integral curve of $\bsX_H$ satisfying 
the initial condition
\be\label{IC}
    \gamma_\zeta(0) = \zeta.
\ee
That is, the curve is determined by the differential equation
\be
    \dot{\gamma}_\zeta(t) = (\bsX_H)_{\gamma_\zeta(t)}
    \qquad
    (t \in (\bsa_\zeta, \bsb_\zeta)).
\ee
Here and below, the dot refers to the differentiation with respect to time. 
Note also that the endpoints of the maximal domain are appropriate (unique) 
constants obeying
\be\label{endpoints}
    -\infty \leq \bsa_\zeta < 0 < \bsb_\zeta \leq \infty.
\ee

Now, making use of $T$~\eqref{T}, for any $\zeta \in P$ consider the 
well-defined smooth curve
\be\label{curve_c}
    c_\zeta \colon (-\bsb_\zeta, -\bsa_\zeta) \rightarrow P,
    \quad
    t \mapsto c_\zeta(t) = T(\gamma_\zeta(-t)).
\ee
Since $H$ is invariant under the anti-Poisson map $T$, it is plain that for 
the action of the tangent vector $\dot{c}_\zeta(t) \in T_{c_\zeta(t)} P$ on 
any smooth function $f \in C^\infty(P)$ we can write
\be\label{c_dot}
\begin{split}
    & \dot{c}_\zeta(t)[f] 
        = - (T \circ \gamma_\zeta)\spdot(-t)[f]
        = - \dot{\gamma}_\zeta(-t) [f \circ T] \\
    & \quad = -(\bsX_H [f \circ T])(\gamma_\zeta(-t))
            = - \PB{f \circ T}{H}(\gamma_\zeta(-t)) \\
    & \quad = - \PB{f \circ T}{H \circ T}(\gamma_\zeta(-t))
            = \PB{f}{H}(T(\gamma_\zeta(-t))) \\
    & \quad = (\bsX_H [f])(c_\zeta(t)) 
            =(\bsX_H)_{c_\zeta(t)}[f].
\end{split}
\ee
That is, $c_\zeta$ in an integral curve of $\bsX_H$ satisfying the initial 
condition
\be\label{c(0)}
    c_\zeta(0) = T(\gamma_\zeta(-0)) = T(\zeta).
\ee
Thus, simply by comparing $c_\zeta$ with the maximal integral curve
\be\label{gamma_T(zeta)}
    \gamma_{T(\zeta)} 
    \colon (\bsa_{T(\zeta)}, \bsb_{T(\zeta)}) \rightarrow P,
    \quad
    t \mapsto \gamma_{T(\zeta)}(t),
\ee
from the maximality of $\gamma_{T(\zeta)}$ we infer that
\be\label{comparing_domains}
    (-\bsb_\zeta, -\bsa_\zeta) \subseteq (\bsa_{T(\zeta)}, \bsb_{T(\zeta)}),
\ee
and also
\be\label{c&gamma}
    c_\zeta(t) = \gamma_{T(\zeta)}(t)
    \qquad
    (t \in (-\bsb_\zeta, -\bsa_\zeta)).
\ee
Notice that, on account of \eqref{comparing_domains}, for all $\zeta \in P$
we have
\be\label{a&b_1}
    \bsa_{T(\zeta)} \leq -\bsb_\zeta 
    \midand 
    -\bsb_{T(\zeta)} \leq \bsa_\zeta.
\ee 
However, since $T$ is an involution, utilizing the above inequalities we can 
also write
\be\label{a&b_2}
    \bsa_\zeta = \bsa_{T(T(\zeta))} \leq -\bsb_{T(\zeta)}
    \midand 
    -\bsb_\zeta = -\bsb_{T(T(\zeta))} \leq \bsa_{T(\zeta)}.
\ee 
Now, keeping in mind \eqref{curve_c} and \eqref{c&gamma}, the comparison of 
\eqref{a&b_1} and \eqref{a&b_2} leads to the following result immediately.

\begin{LEMMA}\label{LEMMA:T-invariance}
Due to the invariance property $T^* H = H$, for each point $\zeta \in P$ we 
have
\be\label{a&b_OK}
    \bsa_{T(\zeta)} = -\bsb_\zeta
    \midand
    \bsb_{T(\zeta)} = -\bsa_\zeta.
\ee
Also, for all $t \in (\bsa_{T(\zeta)}, \bsb_{T(\zeta)})$ we can write
$\gamma_{T(\zeta)}(t) = T(\gamma_\zeta(-t))$.
\end{LEMMA}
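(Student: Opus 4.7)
The plan is to observe that the entire heavy lifting has already been done in the paragraph preceding the statement, and only a pinching argument plus a lookup of definitions remains.

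First, I would note that the inequalities collected in \eqref{a&b_1} and \eqref{a&b_2} are precisely of the form $A \leq B$ and $B \leq A$ for the two pairs $(\bsa_{T(\zeta)}, -\bsb_\zeta)$ and $(-\bsb_{T(\zeta)}, \bsa_\zeta)$. Indeed, \eqref{a&b_1} gives $\bsa_{T(\zeta)} \leq -\bsb_\zeta$ directly, while \eqref{a&b_2}, obtained simply by re-applying \eqref{a&b_1} with $\zeta$ replaced by $T(\zeta)$ and invoking $T^{-1} = T$, yields the opposite estimate $-\bsb_\zeta \leq \bsa_{T(\zeta)}$. Therefore $\bsa_{T(\zeta)} = -\bsb_\zeta$. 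The same symmetric argument for the other endpoint gives $\bsb_{T(\zeta)} = -\bsa_\zeta$, which proves the first half of \eqref{a&b_OK}.

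Next, I would use these endpoint equalities to upgrade the inclusion \eqref{comparing_domains} to an equality of intervals, namely $(-\bsb_\zeta, -\bsa_\zeta) = (\bsa_{T(\zeta)}, \bsb_{T(\zeta)})$. Combined with the already established identification \eqref{c&gamma} and the defining formula \eqref{curve_c} for $c_\zeta$, this gives
\be
    \gamma_{T(\zeta)}(t) = c_\zeta(t) = T(\gamma_\zeta(-t))
    \qquad
    (t \in (\bsa_{T(\zeta)}, \bsb_{T(\zeta)})),
\ee
which is the second assertion.

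There is no real obstacle here: the substantive work, namely the calculation \eqref{c_dot} showing that $t \mapsto T(\gamma_\zeta(-t))$ is an integral curve of $\bsX_H$ (which uses $T^*H = H$ and the anti-symplectic property \eqref{T&omega}), together with the maximality argument producing \eqref{comparing_domains} and \eqref{c&gamma}, is already carried out in the preceding discussion. The only conceptual ingredient I would emphasize in the write-up is that the involutivity $T^2 = \Id$ is what allows the one-sided inclusion of maximal domains to be promoted to an equality via the usual $A \leq B \leq A$ pinching.
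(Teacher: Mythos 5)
Your proposal is correct and follows exactly the paper's own route: the paper likewise treats the lemma as an immediate consequence of pinching the inequalities \eqref{a&b_1} and \eqref{a&b_2} (the latter obtained from the former via the involutivity of $T$), and then reads off the curve identity from \eqref{curve_c} and \eqref{c&gamma} once the domains are identified. Nothing is missing.
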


As the Lemma suggests, $T$~\eqref{T} is sometimes called the time reversal 
map. In any case, the message of the above result is crystal clear: 
if one wishes to analyze the Hamiltonian flow generated by a $T$-invariant 
Hamiltonian function, any property of the backward flow can be inferred 
effortlessly from the study of the forward flow. This observation is of 
completely general nature, since it applies to any Hamiltonian invariant 
under the reversal of rapidities. Of course, in order to derive sharper 
results on the flow, one has to face the peculiarities of the Hamiltonian 
at hand. With this in mind, we shall need estimates on the time evolution 
of the coordinate functions \eqref{q&theta}. 

Starting with the particle positions $q_a$ $(a \in \bN_n)$, from \eqref{PB} 
and \eqref{bsX_H} we find
\be\label{bsX_H_q_a_GEN}
    \bsX_H[q_a] = \PB{q_a}{H} = \PD{H}{\theta_a}.
\ee
Thus, giving a glance at $H$~\eqref{H}, it is straightforward that
\begin{align}
    \bsX_H[q_1] & = \beta \sinh(\beta \theta_1) f_\beta(q_1 - q_2), 
    \label{bsX_H_q_1__in_q&theta} \\
    \bsX_H[q_n] & = \beta \sinh(\beta \theta_n) 
                    f_\beta(q_{n - 1} - q_n) 
                    f_\beta(2 q_n) f_{\kappa \beta}(2 q_n),
    \label{bsX_H_q_n__in_q&theta}
\end{align}
whereas for $2 \leq a \leq n - 1$ we have
\be\label{bsX_H_q_a__in_q&theta}
    \bsX_H[q_a] = \beta \sinh(\beta \theta_a) 
                    f_\beta(q_{a - 1} - q_a) f_\beta(q_a - q_{a + 1}).
\ee
Therefore, combining the estimate \eqref{f_mu_ineq} with the trivial 
inequality
\be\label{sinh_ineq}
    \ABS{\sinh(r)} = \sinh(\ABS{r}) < \cosh(r)
    \qquad
    (r \in \bR), 
\ee
we conclude 
\be\label{bsX_H_q_ineq}
    \ABS{\bsX_H[q_a]} < \beta H
    \qquad
    (a \in \bN_n).
\ee

Turning to the rapidities $\theta_a$ $(a \in \bN_n)$, from the explicit form 
of $H$~\eqref{H} it is plain that
\be\label{ineq_1}
    e^{\ABS{\beta \theta_a}} < 2 \cosh(\beta \theta_a) < 2 H, 
\ee
which entails
\be
\label{theta_ineq}
    \ABS{\theta_a} < \beta^{-1} \ln(2 H).
\ee
Note that in the derivation of both inequalities \eqref{bsX_H_q_ineq} and 
\eqref{theta_ineq} it is critical that $\kappa$ is a non-negative parameter, 
as we imposed in \eqref{kappa_nonnegative}. Now, by exploiting these 
estimates, the following result is immediate.

\begin{THEOREM}\label{THEOREM:completeness}
The Hamiltonian vector field $\bsX_H$~\eqref{bsX_H} corresponding to the 
Hamiltonian function $H$~\eqref{H} is complete.
\end{THEOREM}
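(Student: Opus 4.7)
The plan is to combine the coordinate estimates \eqref{bsX_H_q_ineq} and \eqref{theta_ineq} with a standard ``escape-from-compacta'' argument, using the time reversal symmetry of Lemma~\ref{LEMMA:T-invariance} to cut the work in half. First I would observe that since $H$ is a first integral of $\bsX_H$, for any fixed $\zeta \in P$ and $h_0 = H(\zeta)$ we have $H(\gamma_\zeta(t)) = h_0$ for every $t$ in the maximal interval $(\bsa_\zeta, \bsb_\zeta)$. Feeding this into \eqref{theta_ineq} gives the uniform bound $|\theta_a(\gamma_\zeta(t))| < \beta^{-1} \ln(2 h_0)$ on the rapidities throughout the life of the curve.

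Next I would use \eqref{bsX_H_q_ineq}: along the integral curve, $\frac{\dd}{\dd t} q_a(\gamma_\zeta(t)) = \bsX_H[q_a](\gamma_\zeta(t))$, and the right-hand side is bounded in absolute value by $\beta h_0$. Integrating from $0$ to any $t \in [0, \bsb_\zeta)$ yields
\be
    |q_a(\gamma_\zeta(t)) - q_a(\zeta)| \leq \beta h_0 \, t,
\ee
so the particle positions stay in a bounded interval on any finite time slice. Therefore, if one were to suppose $\bsb_\zeta < \infty$, the whole forward integral curve $\gamma_\zeta([0, \bsb_\zeta))$ would lie in the compact box
\be
    K = \prod_{a=1}^n [q_a(\zeta) - \beta h_0 \bsb_\zeta, q_a(\zeta) + \beta h_0 \bsb_\zeta]
        \times \prod_{a=1}^n [-\beta^{-1}\ln(2 h_0), \beta^{-1}\ln(2 h_0)] \subset P.
\ee

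This contradicts the standard ODE fact that a maximal integral curve with finite forward endpoint must eventually leave every compact subset of the phase space. Hence $\bsb_\zeta = \infty$ for every $\zeta \in P$. To close, I would invoke Lemma~\ref{LEMMA:T-invariance}: applying the result just obtained to the point $T(\zeta) \in P$ gives $\bsb_{T(\zeta)} = \infty$, whence $\bsa_\zeta = -\bsb_{T(\zeta)} = -\infty$. Thus the maximal domain is all of $\bR$, proving completeness.

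There is no genuine obstacle here; the real content of the theorem has already been carried out in the derivation of \eqref{bsX_H_q_ineq} and \eqref{theta_ineq}, both of which crucially rely on $\kappa \geq 0$ (so that every summand of $H$ is non-negative up to the $n$ trivial units, and $H$ dominates each $\cosh(\beta \theta_a)$ and each coupling factor $f_\beta, f_{\kappa\beta}$). The only point worth double-checking is the precise form of the escape lemma used, but since $P = \bR^{2n}$ is a manifold without boundary and $\bsX_H$ is smooth, the standard statement applies verbatim.
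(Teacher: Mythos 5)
Your argument is correct and is essentially identical to the paper's proof: both use conservation of $H$ together with the estimates \eqref{bsX_H_q_ineq} and \eqref{theta_ineq} to trap the forward trajectory in a compact box when $\bsb_\zeta < \infty$, contradict the escape-from-compacta lemma, and then invoke Lemma~\ref{LEMMA:T-invariance} for backward completeness. No issues.
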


\begin{proof}
According to Lemma~\ref{LEMMA:T-invariance}, it suffices to show that the
Hamiltonian flow generated by $H$ is forward-complete. In other words, we 
must show that for each $\zeta \in P$ we have $\bsb_\zeta = \infty$. 

Suppose to the contrary that there is a point $\zeta \in P$ such that 
$\bsb_\zeta < \infty$. Since $H$ is a first integral of $\bsX_H$, on account 
of \eqref{bsX_H_q_ineq} it is obvious that $\forall a \in \bN_n$ and 
$\forall t \in [0, \bsb_\zeta)$ we can write
\be\label{completeness_ineq_1}
    \ABS{(q_a \circ \gamma_\zeta)\spdot(t)}
        = \vert (\bsX_H)_{\gamma_\zeta(t)}[q_a] \vert
        < \beta H(\gamma_\zeta(t))
        = \beta H(\zeta),
\ee
from where we infer the estimate
\be\label{completeness_ineq_2}
\begin{split}
    & \ABS{q_a(\gamma_\zeta(t)) - q_a(\zeta)} 
        = \ABS{q_a(\gamma_\zeta(t)) - q_a(\gamma_\zeta(0))} \\
    & \quad 
        = \ABS{ \int_0^t (q_a \circ \gamma_\zeta)\spdot(\tau) \, \dd \tau}
        \leq \int_0^t \ABS{(q_a \circ \gamma_\zeta)\spdot(\tau)} \, \dd \tau \\
    & \quad 
        \leq \int_0^t \beta H(\zeta) \, \dd \tau
        = \beta H(\zeta) t < \beta H(\zeta) \bsb_\zeta.
\end{split}
\ee
That is, upon introducing the closed cube
\be\label{cube_B}
    \cB_\zeta = \SET{\xi \in \bR^n}
                    {\forall a \in \bN_n 
                        \text{ we have } 
                        \ABS{\xi_a - q_a(\zeta)} 
                        \leq \beta H(\zeta) \bsb_\zeta},
\ee
we can write
\be\label{q_in_cube}
    (q_1(\gamma_\zeta(t)), \ldots, q_n(\gamma_\zeta(t))) \in \cB_\zeta
    \qquad
    (t \in [0, \bsb_\zeta)).
\ee

As concerns the time evolution of $\theta_a$ $(a \in \bN_n)$, from 
\eqref{theta_ineq} it is plain that $\forall t \in [0, \bsb_\zeta)$ we have
\be\label{completeness_ineq_3}
    \ABS{\theta_a(\gamma_\zeta(t))} 
        < \beta^{-1} \ln(2 H(\gamma_\zeta(t)))
        = \beta^{-1} \ln(2 H(\zeta)).
\ee
That is, with the aid of the closed cube
\be\label{cube_C}
    \cC_\zeta = \SET{\eta \in \bR^n}
                    {\forall a \in \bN_n 
                        \text{ we have } 
                        \ABS{\eta_a} \leq \beta^{-1} \ln(2 H(\zeta))},
\ee
the content of \eqref{completeness_ineq_3} can be rephrased as
\be\label{theta_in_cube}
    (\theta_1(\gamma_\zeta(t)), \ldots, \theta_n(\gamma_\zeta(t))) 
        \in \cC_\zeta
    \qquad
    (t \in [0, \bsb_\zeta)).
\ee
Combining this observation with \eqref{q_in_cube}, we conclude
\be\label{gamma_in_cube}
    \gamma_\zeta(t) \in \cB_\zeta \times \cC_\zeta
    \qquad
    (t \in [0, \bsb_\zeta)).
\ee

On the other hand, since $\bsb_\zeta < \infty$, the theory of ordinary 
differential equations guarantees that $\gamma_\zeta$ `escapes' from the 
\emph{compact} subset $\cB_\zeta \times \cC_\zeta \subseteq P$ (see e.g.
\cite[Theorem 3.7]{Si}). More precisely, one can find a small positive 
number $\delta \in (0, \bsb_\zeta)$ such that 
\be\label{gamma_escapes}
    \gamma_\zeta(t) \notin \cB_\zeta \times \cC_\zeta
    \qquad
    (t \in (\bsb_\zeta - \delta, \bsb_\zeta)),
\ee
contradicting \eqref{gamma_in_cube}. As a consequence, the assumption 
$\bsb_\zeta < \infty$ must be rejected, and so the proof is complete.
\end{proof}

\subsection{New set of canonical variables}\label{SUBSECTION:new_variables}
It was observed during the early developments of the relativistic Toda 
chains that many calculations could be made simpler in variables 
different from the original coordinates provided by the particle positions 
and rapidities \cite{BR88, BR89}. From Suris' paper \cite{S90} it is also
clear that these special change of coordinates are instrumental in finding
the link between certain discrete time generalized Toda lattices and the
relativistic Toda chains. Therefore, taking the lead of Section 5 in 
\cite{S90}, we find it convenient to introduce a family of smooth functions 
as follows. Let
\begin{align}
    & s_1 = \beta^{-1} \ln(f_\beta(q_1 - q_2)), 
    \label{s_1} \\
    & s_n = \beta^{-1} 
            \BR{-\ln(f_\beta(q_{n - 1} - q_n)) 
            + \ln(f_\beta(2 q_n)) 
            + \ln(f_{\kappa \beta}(2 q_n))},
    \label{s_n}
\end{align}
whereas for $2 \leq a \leq n - 1$ we define
\be\label{s_a}
    s_a 
    = \beta^{-1} 
        \BR{-\ln(f_\beta(q_{a - 1} - q_a)) + \ln(f_\beta(q_a - q_{a + 1}))}.
\ee
Built upon the above auxiliary objects, we also define
\be\label{p_a}
    p_a = \theta_a + s_a \in C^\infty(P)
    \qquad
    (a \in \bN_n),
\ee
and the map
\be\label{S}
    S \colon P \rightarrow P,
    \quad
    \zeta 
        \mapsto 
        (q_1(\zeta), \ldots, q_n(\zeta), p_1(\zeta), \ldots, p_n(\zeta)).
\ee
As it turns out, these new functions \eqref{p_a} provide a convenient
framework to analyze the Hamiltonian system \eqref{H}. As an added bonus,
the comparison with \cite{S90} also becomes more transparent.

We initiate the study of the functions \eqref{p_a} with an elementary 
observation: by working out their first order partial derivatives, 
there is no difficulty in verifying that
\be
    \PD{s_b}{q_a} = \PD{s_a}{q_b}
    \qquad
    (a, b \in \bN_n),
\ee
and so the following is immediate.

\begin{PROPOSITION}\label{PROPOSITION:closed_1_form}
The differential $1$-form $\sum_{c = 1}^n s_c \, \dd q_c$ is closed.
\end{PROPOSITION}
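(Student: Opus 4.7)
The plan is to go slightly beyond the claim and exhibit $\sum_{c=1}^n s_c\, \dd q_c$ as an \emph{exact} $1$-form, from which closedness is immediate. Observe first that each $s_c$ is a smooth function of the positions $q_1,\ldots,q_n$ alone, with no $\theta$-dependence, so any potential $F$ may be sought as a function of $q_1,\ldots,q_n$ only, and the identity $\dd F = \sum_c s_c\, \dd q_c$ reduces to verifying $\partial F/\partial q_a = s_a$ for every $a\in\bN_n$. The alternative, which is the route hinted at by the sentence preceding the statement, is to compute $\partial s_b/\partial q_a$ directly and check $\partial s_b/\partial q_a = \partial s_a/\partial q_b$; this amounts to the same elementary bookkeeping.

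To write down a candidate potential, I introduce the shorthand $h_\mu(r)=\beta^{-1}\ln f_\mu(r)$ and denote by $H_\mu$ any antiderivative of $h_\mu$. The three cases in the definitions \eqref{s_1}, \eqref{s_n}, \eqref{s_a} then take the compact form $s_1 = h_\beta(q_1-q_2)$, $s_a = -h_\beta(q_{a-1}-q_a)+h_\beta(q_a-q_{a+1})$ for $2\le a\le n-1$, and $s_n = -h_\beta(q_{n-1}-q_n)+h_\beta(2q_n)+h_{\kappa\beta}(2q_n)$. The telescoping pattern suggests setting
\begin{equation*}
    F = \sum_{a=1}^{n-1} H_\beta(q_a-q_{a+1}) + \tfrac{1}{2}H_\beta(2q_n) + \tfrac{1}{2}H_{\kappa\beta}(2q_n) \in C^\infty(P).
\end{equation*}

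To finish I would compute $\partial F/\partial q_a$ case by case: for $a=1$ only $H_\beta(q_1-q_2)$ contributes; for $2\le a\le n-1$ only $H_\beta(q_{a-1}-q_a)$ and $H_\beta(q_a-q_{a+1})$; for $a=n$ all three of the tail terms. The factors $\tfrac{1}{2}$ multiplying $H_\beta(2q_n)$ and $H_{\kappa\beta}(2q_n)$ are exactly what is needed to cancel the chain-rule factor $2$ coming from differentiation of the argument $2q_n$, so that $\partial F/\partial q_n$ reproduces $-h_\beta(q_{n-1}-q_n)+h_\beta(2q_n)+h_{\kappa\beta}(2q_n)$ on the nose. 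The verification is entirely routine; the only mild pitfall is the sign bookkeeping and the boundary doubling at $a=n$, which is not so much an obstacle as a reminder to handle the endpoint of the chain with care.
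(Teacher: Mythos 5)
Your proposal is correct, and it takes a genuinely different (though closely related) route from the paper. The paper proves closedness directly by verifying the symmetry of the mixed partial derivatives, $\partial s_b/\partial q_a = \partial s_a/\partial q_b$ for all $a, b \in \bN_n$, and only afterwards invokes the topological triviality of $P$ to conclude that the closed form is also exact, introducing the potential $\cS$ of \eqref{cS} abstractly. You instead exhibit an explicit potential
\begin{equation*}
    F = \sum_{a=1}^{n-1} H_\beta(q_a-q_{a+1})
        + \tfrac{1}{2}H_\beta(2q_n) + \tfrac{1}{2}H_{\kappa\beta}(2q_n),
\end{equation*}
with $H_\mu$ any antiderivative of $h_\mu(r) = \beta^{-1}\ln f_\mu(r)$, and deduce closedness from $\dd^2 = 0$. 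Your case-by-case check of $\partial F/\partial q_a = s_a$ is indeed routine and the sign and chain-rule bookkeeping (in particular the factors $\tfrac{1}{2}$ compensating the argument $2q_n$) works out exactly as you describe. What your approach buys is a concrete formula for the function $\cS$ that the paper only asserts to exist via the Poincar\'e lemma; what the paper's approach buys is that it never needs to produce or name an antiderivative at all, staying strictly at the level of the hypothesis actually being claimed.
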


Incidentally, since $P$~\eqref{P} is topologically trivial, the closed 
differential $1$-form featuring the above Proposition is automatically 
exact. In other words, there is a smooth function $\cS \in C^\infty(P)$, 
unique up to a constant, such that
\be\label{cS}
    \sum_{c = 1}^n s_c \, \dd q_c = \dd \cS.
\ee
However, from our perspective it is more important that the symplectic form
\eqref{omega} is also exact, and it can be written as
\be\label{omega_exact}
    \omega = - \dd \Big( \sum_{c = 1}^n \theta_c \, \dd q_c \Big).
\ee
Combining this observation with Proposition~\ref{PROPOSITION:closed_1_form},
one can easily verify following result.

\begin{LEMMA}\label{LEMMA:S}
The map $S$~\eqref{S} is a symplectomorphism.
\end{LEMMA}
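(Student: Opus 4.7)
The plan is to verify two things: that $S$ is a diffeomorphism of $P$, and that it preserves the symplectic form $\omega$~\eqref{omega}.

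First I would write down the inverse of $S$ explicitly. Because the auxiliary functions $s_a$ in \eqref{s_1}--\eqref{s_a} depend only on the position variables $q_1,\ldots,q_n$, and because $S$ leaves these positions fixed (only the rapidities are shifted), one can define the smooth map $\widetilde{S}\colon P\rightarrow P$ by $(\xi,\eta)\mapsto(\xi_1,\ldots,\xi_n,\eta_1 - s_1(\xi,\eta),\ldots,\eta_n - s_n(\xi,\eta))$. Using the invariance of $s_a$ under such a shift, $\widetilde{S}\circ S$ and $S\circ\widetilde{S}$ are both seen to be the identity on $P$, so $S$ is a smooth diffeomorphism.

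Second, I would compute the pullback $S^*\omega$ directly in the global coordinates \eqref{q&theta}. By the very definition of $S$~\eqref{S} we have $S^*q_a = q_a$ and $S^*\theta_a = p_a = \theta_a + s_a$ for every $a\in\bN_n$. Substituting into \eqref{omega} and expanding,
\be
    S^*\omega
    = \sum_{c=1}^n \dd q_c \wedge \dd(\theta_c + s_c)
    = \omega + \sum_{c=1}^n \dd q_c \wedge \dd s_c.
\ee
The correction term can be rewritten as $-\dd\bigl(\sum_{c=1}^n s_c\,\dd q_c\bigr)$, and by Proposition~\ref{PROPOSITION:closed_1_form} this exterior derivative vanishes. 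Hence $S^*\omega = \omega$, completing the verification.

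There is really no hard step here: once the explicit inverse is noted, the only thing to do is to pull $\omega$ back and invoke the closedness of $\sum_c s_c\,\dd q_c$, which was precisely the content of the preceding proposition. The only mild subtlety worth being careful about is the sign when commuting $\dd q_c$ past $\dd s_c$ under the exterior derivative, but once it is handled correctly, Proposition~\ref{PROPOSITION:closed_1_form} delivers the conclusion immediately.
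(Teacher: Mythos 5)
Your proposal is correct and follows essentially the same route as the paper: exhibit the explicit inverse (possible because the $s_a$ depend only on the positions), then pull back $\omega$ and reduce the correction term to $-\dd\bigl(\sum_{c}s_c\,\dd q_c\bigr)$, which vanishes by Proposition~\ref{PROPOSITION:closed_1_form}. The paper phrases the second step via the exactness $\omega=-\dd\bigl(\sum_c\theta_c\,\dd q_c\bigr)$ rather than expanding $\dd q_c\wedge\dd(\theta_c+s_c)$, but this is only a cosmetic difference.
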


\begin{proof}
Note that the auxiliary functions $s_a$ are independent of the rapidities. 
Thus, simply by inspecting the defining formula of $S$~\eqref{S}, we see 
that it is an invertible map with inverse
\be\label{S_inverse}
    S^{-1}(\zeta) = (q_1(\zeta), \ldots, q_n(\zeta), 
                        \theta_1(\zeta) - s_1(\zeta), 
                        \ldots, 
                        \theta_n(\zeta) - s_n(\zeta))
    \qquad
    (\zeta \in P).
\ee
Since both $S$ and $S^{-1}$ are smooth, $S$ is a diffeomorphism. 

Next, pulling back the coordinate functions \eqref{q&theta} by $S$, it is 
also evident that
\be\label{S*q&p}
    S^* q_a = q_a
    \midand
    S^* \theta_a = p_a
    \qquad
    (a \in \bN_n).
\ee
Consequently, taking into account \eqref{omega_exact}, 
Proposition~\ref{PROPOSITION:closed_1_form} entails
\be\label{S*omega}
    S^* \omega = -\dd \Big( \sum_{c = 1}^n p_c \, \dd q_c \Big)
    = \omega -\dd \Big( \sum_{c = 1}^n s_c \, \dd q_c \Big)
    = \omega,
\ee
which clearly shows that $S$ is a symplectic map.
\end{proof}

\begin{COROLLARY}
\label{COROLLARY:Darboux_coords}
The system of smooth functions $(q_1, \ldots, q_n, p_1, \ldots, p_n)$ 
provides global canonical coordinates on the phase space $P$.
\end{COROLLARY}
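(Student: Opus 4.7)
The plan is to deduce this corollary directly from Lemma \ref{LEMMA:S}, since all the substantive work is already in place. Two items must be verified: first, that the family $(q_1, \ldots, q_n, p_1, \ldots, p_n)$ determines a global diffeomorphism from $P$ onto $\bR^{2 n}$; and second, that these functions satisfy the canonical Poisson relations $\PB{q_a}{q_b} = 0$, $\PB{p_a}{p_b} = 0$ and $\PB{q_a}{p_b} = \delta_{a b}$ for all $a, b \in \bN_n$.

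For the first item, I would observe that the standard chart $\Psi \colon P \to \bR^{2 n}$ given by $\zeta \mapsto (q_1(\zeta), \ldots, q_n(\zeta), \theta_1(\zeta), \ldots, \theta_n(\zeta))$ is already a global diffeomorphism, by virtue of the manifold structure imposed on $P$ in \eqref{P}. Invoking the pullback identities $S^* q_a = q_a$ and $S^* \theta_a = p_a$ established in the proof of Lemma \ref{LEMMA:S}, the candidate chart $\zeta \mapsto (q_1(\zeta), \ldots, q_n(\zeta), p_1(\zeta), \ldots, p_n(\zeta))$ coincides with the composition $\Psi \circ S$, and so is a diffeomorphism as the composition of diffeomorphisms.

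For the second item, the symplectomorphism property $S^* \omega = \omega$ from Lemma \ref{LEMMA:S} ensures that $S^*$ intertwines the Poisson bracket, in the sense that $\PB{S^* f}{S^* h} = S^* \PB{f}{h}$ for all $f, h \in C^\infty(P)$. Applying this identity together with the pullback relations above to the pairs $(q_a, q_b)$, $(\theta_a, \theta_b)$ and $(q_a, \theta_b)$—whose canonical brackets are manifest from the coordinate expression \eqref{PB}—yields at once the desired canonical relations for $(q_a, p_b)$.

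No real obstacle is expected here: the content of the corollary is essentially a repackaging of Lemma \ref{LEMMA:S} in the language of canonical coordinates, and what remains is only to unpack the definitions. The only point worth emphasizing in the write-up is that the two items above are precisely what is meant by ``global canonical coordinates,'' so that the corollary is an immediate consequence.
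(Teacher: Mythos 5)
Your proposal is correct and follows exactly the route the paper intends: the corollary is stated as an immediate consequence of Lemma~\ref{LEMMA:S}, and your two verification steps (the chart $\zeta \mapsto (q(\zeta), p(\zeta))$ equals $S$ up to the standard identification $P = \bR^{2n}$, hence is a global diffeomorphism, and the symplectomorphism property $S^*\omega = \omega$ transports the canonical Poisson relations from $(q_a, \theta_b)$ to $(q_a, p_b)$) are precisely the routine unpacking the paper leaves implicit.
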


Besides the above Darboux system, in the following we shall use extensively 
a distinguished family of non-canonical variables, too. Namely, for each 
$a \in \bN_n$ we define
\be\label{x_a}
    x_a = e^{\beta p_a},
\ee
as well as
\be\label{w_a}
    w_a
    = \begin{cases}
        \beta^2 e^{-(q_a - q_{a + 1})}, & \text{if $a \in \bN_{n - 1}$;} \\
        \beta^2 e^{-2 q_n},             & \text{if $a = n$.}
    \end{cases}
\ee
Though it should not cause confusion that the dependence of the above 
functions on the parameters $\beta$ and $\kappa$ is not indicated, we 
wish to stress that these parameters are still part of our analysis. 
Finally, it proves also handy to introduce the shorthand notation
\be\label{Delta}
    \Delta = 1 + \kappa^2 w_n.
\ee

\begin{LEMMA}\label{LEMMA:H}
In terms of the variables \eqref{x_a} and \eqref{w_a}, the 
Hamiltonian function \eqref{H} acquires the form
\be\label{H_in_x&w}
\begin{split}
    H = &   \half x_1 
            + \half \sum_{c = 1}^{n - 1} (x_c^{-1} + x_{c + 1})(1 + w_c)    
            + \half x_n^{-1} (1 + w_n) \Delta 
            + \kappa w_n + \half \kappa w_{n - 1} w_n.
\end{split}
\ee
\end{LEMMA}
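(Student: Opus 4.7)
The plan is a direct computation, relying on the observation that the auxiliary functions $s_a$ were designed precisely so that $e^{\pm\beta s_a}$ reassembles the $f$-factors that multiply $\cosh(\beta\theta_a)$ in each summand of $H$. I would proceed in three steps.

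First, I would record the identities
\begin{align*}
    e^{\beta s_1} &= f_\beta(q_1 - q_2), \\
    e^{\beta s_a} &= \frac{f_\beta(q_a - q_{a+1})}{f_\beta(q_{a-1} - q_a)} \qquad (2 \le a \le n-1), \\
    e^{\beta s_n} &= \frac{f_\beta(2 q_n) f_{\kappa\beta}(2 q_n)}{f_\beta(q_{n-1} - q_n)},
\end{align*}
which follow directly from \eqref{s_1}--\eqref{s_a}. Combined with $x_a = e^{\beta p_a} = e^{\beta \theta_a} e^{\beta s_a}$ from \eqref{p_a} and \eqref{x_a}, this gives $\cosh(\beta \theta_a) = \tfrac{1}{2}\bigl(x_a e^{-\beta s_a} + x_a^{-1} e^{\beta s_a}\bigr)$.

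Second, I would substitute these expressions into the three types of kinetic summands in $H$~\eqref{H}, multiplying out each product. In every case one of the two terms loses all $f$-factors while the other picks up squares of $f$-functions; using $f_\mu(r)^2 = 1 + \mu^2 e^{-r}$ together with the definitions \eqref{w_a} and \eqref{Delta}, one obtains
\begin{align*}
    \cosh(\beta\theta_1)\, f_\beta(q_1 - q_2) &= \tfrac{1}{2}\bigl(x_1 + x_1^{-1}(1 + w_1)\bigr), \\
    \cosh(\beta\theta_a)\, f_\beta(q_{a-1} - q_a) f_\beta(q_a - q_{a+1}) &= \tfrac{1}{2}\bigl(x_a (1 + w_{a-1}) + x_a^{-1}(1 + w_a)\bigr), \\
    \cosh(\beta\theta_n)\, f_\beta(q_{n-1} - q_n) f_\beta(2 q_n) f_{\kappa\beta}(2 q_n) &= \tfrac{1}{2}\bigl(x_n (1 + w_{n-1}) + x_n^{-1}(1 + w_n)\Delta\bigr),
\end{align*}
where in the last line one uses both $f_\beta(2q_n)^2 = 1 + w_n$ and $f_{\kappa\beta}(2q_n)^2 = 1 + \kappa^2 w_n = \Delta$. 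The two remaining boundary terms rewrite trivially as $\kappa \beta^2 e^{-2q_n} = \kappa w_n$ and $\tfrac{1}{2}\kappa\beta^4 e^{-(q_{n-1}+q_n)} = \tfrac{1}{2}\kappa w_{n-1} w_n$.

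Third, I would add up the pieces and reindex, noting that the $x_a(1 + w_{a-1})$ contributions for $a = 2, \ldots, n$ combine with the $x_a^{-1}(1 + w_a)$ contributions for $a = 1, \ldots, n-1$ into the single sum $\tfrac{1}{2}\sum_{c=1}^{n-1}(x_c^{-1} + x_{c+1})(1 + w_c)$, leaving $\tfrac{1}{2} x_1$ and $\tfrac{1}{2} x_n^{-1}(1 + w_n)\Delta$ as the unmatched boundary kinetic terms. This reproduces \eqref{H_in_x&w} exactly. There is no real obstacle here; the only point that requires care is the bookkeeping in the reindexing step and the careful handling of the last site, where the extra $f_{\kappa\beta}$ factor is absorbed into the quantity $\Delta$.
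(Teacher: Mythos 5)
Your proposal is correct and follows essentially the same route as the paper's proof: both express $e^{\beta\theta_a}$ via $x_a$ and the $f$-factors using the definitions of $s_a$, use $f_\mu(r)^2 = 1 + \mu^2 e^{-r}$ to convert the surviving squared factors into $1 + w_a$ and $\Delta$, and then substitute into $H$. Your writeup merely spells out the reindexing step that the paper leaves implicit.
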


\begin{proof}
Utilizing \eqref{f_mu}, \eqref{w_a} and \eqref{Delta}, we see that
\be
    f_\beta(2 q_n) = (1 + w_n)^\half
    \midand
    f_{\kappa \beta}(2 q_n) = \Delta^\half,
\ee
whereas
\be
    f_\beta(q_a - q_{a + 1}) = (1 + w_a)^\half
    \qquad
    (a \in \bN_{n - 1}).
\ee
Recalling \eqref{s_1} and \eqref{s_n}, from \eqref{x_a} we obtain
\be
    e^{\beta \theta_1} = x_1 f_\beta(q_1 - q_2)^{-1}
    \midand
    e^{\beta \theta_n} = x_n f_\beta(q_{n - 1} - q_n) 
                            f_\beta(2 q_n)^{-1} 
                            f_{\kappa \beta}(2 q_n)^{-1},
\ee
while \eqref{s_a} leads to
\be
    e^{\beta \theta_a} = x_a f_\beta(q_{a - 1} - q_a) 
                            f_\beta(q_a - q_{a + 1})^{-1}
    \qquad
    (2 \leq a \leq n - 1).
\ee
Now, simply by plugging the above expressions into $H$~\eqref{H}, the 
desired formula \eqref{H_in_x&w} follows immediately.
\end{proof}

As we promised in the Introduction, at this point we are prepared to make a
thorough comparison with the deformed systems defined in \cite{S90}. Giving 
a glance at the definitions \eqref{x_a} and \eqref{w_a}, from \eqref{H_in_x&w} 
we find immediately that
\be\label{H_in_q&p}
\begin{split}
    H = 
    & \sum_{c = 1}^n \cosh(\beta p_c) 
        + \half \beta^2 \sum_{c = 1}^{n - 1} 
            e^{q_{c + 1} - q_c} 
            (e^{-\beta p_c} + e^{\beta p_{c + 1}}) \\
    & + \half \beta^2 e^{-2 q_n - \beta p_n} 
        \big(1 + \kappa^2 (1 + e^{-2 q_n})\big)
        + \kappa \beta^2 e^{-2 q_n} 
        + \half \kappa \beta^4 e^{-(q_{n - 1} + q_n)}.
\end{split}
\ee
Looking at the corresponding Hamiltonian $J_1$ at the end of Section 3 in 
\cite{S90}, it is clear that only the type-I deformations share similarity
with \eqref{H_in_q&p}. Indeed, by identifying the discrete step-size with 
the inverse speed of light, in our notations the Hamiltonian $J_1$ of the 
type-I deformations translates into
\be\label{J_1}
    \half J_1 = 
        \sum_{c = 1}^n \cosh(\beta p_c) 
        + \half \beta^2 \sum_{c = 1}^{n - 1} 
            e^{q_{c + 1} - q_c} 
            (e^{-\beta p_c} + e^{\beta p_{c + 1}})
        + \half J^- + \half J^+,
\ee
where
\begin{align}
    & J^- = a_1 \beta^2 e^{q_1} (1 + e^{\beta p_1}) 
            + b_1 \beta^2 e^{2 q_1 + \beta p_1}, 
    \label{J-} \\
    & J^+ = a_n \beta^2 e^{-q_n} (1 + e^{-\beta p_n})
            + b_n \beta^2 e^{-2 q_n - \beta p_n}.
    \label{J+}
\end{align}
In the above formulas the coupling parameters $a_1$, $b_1$, $a_n$, and 
$b_n$ control the strength of the interaction at the boundaries. Now, by 
comparing \eqref{H_in_q&p} with \eqref{J_1}, we see that they coincide if 
and only if $\kappa = 0$ and $(a_1, b_1, a_n, b_n) = (0, 0, 0, 1)$. 
Consequently, the Lax matrices introduced in \cite{S90} are applicable for
our models \eqref{H} only in the special case $\kappa = 0$, which case was 
analyzed in \cite{R90}, too. This observation fully justifies our effort 
in the $\kappa > 0$ cases. Apart from clarifying the connection with Suris' 
work \cite{S90}, our hope is that formula \eqref{H_in_q&p} may prove to be 
the first step toward the discrete time integrable system interpretation 
of \eqref{H}.

Turning back to our analysis, our next goal is to describe the dynamics 
using the global canonical coordinates given in 
Corollary~\ref{COROLLARY:Darboux_coords}. In particular, we need their 
derivatives along the Hamiltonian vector field $\bsX_H$~\eqref{bsX_H}. 
However, thanks their canonicity, for all $a \in \bN_n$ we can write
\be\label{bsX_H_q&p}
    \bsX_H[q_a] = \PD{H}{p_a}
    \midand
    \bsX_H[p_a] = -\PD{H}{q_a}.
\ee
That is, our task amounts to calculating the first order partial derivatives 
of the Hamiltonian $H$~\eqref{H_in_x&w} with respect to the new coordinate 
functions.

\begin{LEMMA}\label{LEMMA:bsX_H_q&p}
In terms of the functions defined in \eqref{x_a} and \eqref{w_a}, the
derivative of the particle position $q_a$ $(a \in \bN_n)$ along the 
Hamiltonian vector field $\bsX_H$ has the form
\be\label{bsX_H_q_a}
    \bsX_H[q_a] = \begin{cases}
        \halfbeta \BR{x_1 - x_1^{-1} (1 + w_1)},                        & 
            \text{if $a = 1$;} \\
        \halfbeta \BR{x_a (1 + w_{a - 1}) - x_a^{-1} (1 + w_a)},        & 
            \text{if $2 \leq a \leq n - 1$;} \\
        \halfbeta \BR{x_n (1 + w_{n - 1}) - x_n^{-1} (1 + w_n) \Delta}, & 
            \text{if $a = n$.}
    \end{cases}
\ee
As concerns the functions $p_a$ $(a \in \bN_n)$ defined in \eqref{p_a}, we 
have 
\be\label{bsX_H_p_1}
    \bsX_H[p_1] = \half (x_1^{-1} + x_2) w_1,
\ee 
whereas for $2 \leq a \leq n - 1$ we can write
\be\label{bsX_H_p_a}
    \bsX_H[p_a] = - \half (x_{a - 1}^{-1} + x_a) w_{a - 1}
                    + \half (x_a^{-1} + x_{a + 1}) w_a 
                    + \half \kappa w_{n - 1} w_n \delta_{a, n - 1},
\ee
and
\be\label{bsX_H_p_n}
\begin{split}
    \bsX_H[p_n] = 
    & - \half (x_{n - 1}^{-1} + x_n) w_{n - 1}
                + x_n^{-1} w_n (1 + \kappa^2 + 2 \kappa^2 w_n) \\ 
    & + 2 \kappa w_n + \half \kappa w_{n - 1} w_n.
\end{split}
\ee
\end{LEMMA}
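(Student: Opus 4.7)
The natural strategy is to apply the canonical Hamilton equations \eqref{bsX_H_q&p} to the expression \eqref{H_in_x&w} for $H$ furnished by Lemma~\ref{LEMMA:H}, and then unfold the partial derivatives by the chain rule, using that $x_a$ depends only on $p_a$ while $w_a$ and $\Delta$ depend only on the positions. Specifically, from the defining formulas \eqref{x_a}, \eqref{w_a}, \eqref{Delta} one reads off the elementary relations
\begin{equation*}
    \PD{x_a}{p_b} = \beta\, x_a\, \delta_{a b},
    \qquad
    \PD{x_a}{q_b} = 0,
\end{equation*}
and
\begin{equation*}
    \PD{w_a}{q_b} = -w_a (\delta_{a b} - \delta_{a + 1, b})
    \text{ for } a \in \bN_{n - 1},
    \qquad
    \PD{w_n}{q_b} = -2 w_n \delta_{b n},
    \qquad
    \PD{\Delta}{q_b} = -2 \kappa^2 w_n \delta_{b n}.
\end{equation*}
With these building blocks in hand, everything that follows is a controlled but entirely mechanical calculation.

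For the position derivatives $\bsX_H[q_a] = \partial H/\partial p_a$, I would simply scan \eqref{H_in_x&w} for occurrences of $x_a$ and $x_a^{-1}$. The term $\half x_1$ contributes only for $a = 1$; inside the sum, the summand $\half(x_c^{-1} + x_{c + 1})(1 + w_c)$ contributes via $x_c^{-1}$ when $c = a$ (provided $a \leq n - 1$) and via $x_{c + 1}$ when $c + 1 = a$ (provided $a \geq 2$); and the boundary term $\half x_n^{-1} (1 + w_n) \Delta$ contributes only when $a = n$. Collecting these contributions with the correct signs immediately yields the three cases of \eqref{bsX_H_q_a}.

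For the momentum derivatives $\bsX_H[p_a] = -\partial H/\partial q_a$, I would proceed index by index. The generic case $2 \leq a \leq n - 1$ is obtained from the identity $\partial w_{a - 1}/\partial q_a = w_{a - 1}$ together with $\partial w_a/\partial q_a = -w_a$ applied to the sum in \eqref{H_in_x&w}; the Kronecker symbol $\delta_{a, n - 1}$ in \eqref{bsX_H_p_a} appears only because the extra cross term $\half \kappa w_{n - 1} w_n$ involves $w_{n - 1}$, which depends on $q_{n - 1}$. The edge case $a = 1$ is simpler, as only $w_1$ depends on $q_1$.

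The only computation that requires real care is the case $a = n$, because here one must differentiate $(1 + w_n)\Delta$ as a product of two factors that both depend on $q_n$, and in addition $\partial w_n/\partial q_n$ carries the factor $-2$ (rather than $-1$). Using $\Delta = 1 + \kappa^2 w_n$ and expanding, one finds
\begin{equation*}
    \half x_n^{-1} \PD{[(1 + w_n) \Delta]}{q_n}
    = -x_n^{-1} w_n \bigl(\Delta + \kappa^2 (1 + w_n)\bigr)
    = -x_n^{-1} w_n (1 + \kappa^2 + 2 \kappa^2 w_n),
\end{equation*}
which after combining with the contributions $-2 \kappa w_n$ from $\kappa w_n$ and $-\half \kappa w_{n - 1} w_n$ from $\half \kappa w_{n - 1} w_n$ (using both $\partial w_{n - 1}/\partial q_n = w_{n - 1}$ and $\partial w_n/\partial q_n = -2 w_n$, which conveniently partially cancel), plus the sum contribution $\half(x_{n - 1}^{-1} + x_n) w_{n - 1}$, yields precisely \eqref{bsX_H_p_n} after negation. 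Thus the only mild obstacle is bookkeeping in this last case; the rest is routine differentiation.
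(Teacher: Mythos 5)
Your proposal is correct and follows essentially the same route as the paper: record the elementary partial derivatives of $x_a$, $w_a$ (and $\Delta$) with respect to the canonical coordinates, then apply the Hamilton equations \eqref{bsX_H_q&p} to the form \eqref{H_in_x&w} of $H$ via the chain rule. Your explicit treatment of the $a=n$ case, including the product rule on $(1+w_n)\Delta$ and the factor $-2$ in $\partial w_n/\partial q_n$, checks out against \eqref{bsX_H_p_n}.
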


\begin{proof}
Giving a glance at \eqref{x_a}, we see
\be\label{x_a_parc_der}
    \PD{x_c}{q_a} = 0
    \midand
    \PD{x_c}{p_a} = \beta x_c \delta_{c, a}
    \qquad
    (a, c \in \bN_n).
\ee
Turning to the functions defined in \eqref{w_a}, it is also plain that
$\forall a \in \bN_n$ we have
\be\label{w_a_parc_der}
    \PD{w_c}{q_a} = - (\delta_{c, a} - \delta_{c + 1, a}) w_c
    \midand
    \PD{w_c}{p_a} = 0
    \qquad
    (c \in \bN_{n - 1}),
\ee
whilst
\be\label{w_n_der}
    \PD{w_n}{q_a} = -2 \delta_{n, a} w_n
    \midand
    \PD{w_n}{p_a} = 0.
\ee
Thus, simply by exploiting the explicit expression given in 
\eqref{H_in_x&w}, the formulas featuring the Lemma can be worked out 
without any difficulty.
\end{proof}

Now an important remark is in order. Similarly to the translational invariant
cases \cite{BR88, BR89, S90}, from the above two Lemmas it is transparent that 
the functions introduced in \eqref{x_a}, \eqref{w_a} and \eqref{Delta} may 
greatly facilitate our calculations. Indeed, the Hamiltonian~\eqref{H_in_x&w} 
in Lemma~\ref{LEMMA:H}, as well as the derivatives featuring 
Lemma~\ref{LEMMA:bsX_H_q&p} are \emph{polynomial} expressions of $x_a$, 
$x_a^{-1}$, $w_a$ and $\Delta$. For this reason, in the rest of the paper we 
shall express almost all objects only in terms of these functions. With this 
in mind, we find it advantageous to record here the 
derivatives of \eqref{x_a} and \eqref{w_a} along the Hamiltonian vector field 
$\bsX_H$. As concerns $x_a$ $(a \in \bN_n)$, it is plain that
\be\label{x_a_der}
    \bsX_H[x_a] = \beta x_a \bsX_H[p_a].
\ee
Thus, thanks to the explicit expressions \eqref{bsX_H_p_1}, \eqref{bsX_H_p_a} 
and \eqref{bsX_H_p_n}, the control over the derivatives $\bsX_H[x_a]$ is 
complete. Turning to $w_a$ $(a \in \bN_n)$, notice that
\be\label{w_a_der}
    w_a^{-1} \bsX_H[w_a]
    = \begin{cases}
        - \bsX_H[q_a] + \bsX_H[q_{a + 1}], 
            & \text{if $a \in \bN_{n - 1}$;} \\
        - 2 \bsX_H[q_n],
            & \text{if $a = n$.} 
    \end{cases}
\ee
Therefore, utilizing~\eqref{bsX_H_q_a}, we can obtain formulas for these
derivatives as well. Since we shall encounter them quite often, below we 
make them explicit. First, clearly we have
\be\label{bsX_H_w_1}
    \bsX_H[w_1] 
    = \halfbeta w_1 
        \BR{ -x_1 + (x_1^{-1} + x_2) (1 + w_1) - x_2^{-1}(1 + w_2) }.
\ee
Next, for $2 \leq a \leq n - 2$ we can write
\be\label{bsX_H_w_a}
    \bsX_H[w_a] = \halfbeta w_a 
                    \BR{- x_a (1 + w_{a - 1})
                        + (x_a^{-1} + x_{a + 1}) (1 + w_a)
                        - x_{a + 1}^{-1}(1 + w_{a + 1})}, 
\ee
whereas
\be\label{bsX_H_w_n-1}
\begin{split}
    \bsX_H[w_{n - 1}] = 
    \halfbeta w_{n - 1} 
        \big( & - x_{n - 1} (1 + w_{n - 2})
                + (x_{n - 1}^{-1} + x_n) (1 + w_{n - 1}) \\ 
            & - x_n^{-1}(1 + w_n) \Delta \big),
\end{split}
\ee
and finally
\be\label{bsX_H_w_n}
    \bsX_H[w_n] 
    = \beta w_n \BR{ -x_n (1 + w_{n - 1}) + x_n^{-1}(1 + w_n) \Delta }.
\ee

\section{Lax matrices}\label{SECTION:Lax matrices}
Mainly for fixing our conventions, in the beginning of this section we gather 
some rudimentary facts from the theory of matrices. Subsequently, we associate
a Lax triad and a Lax equation to the dynamics governed by the Hamiltonian
\eqref{H}.

\subsection{Algebraic preliminaries}\label{SUBSECTION:algebraic_preliminaries}
Taking an arbitrary positive 
integer $m \in \bN$, let $\bszero_m$ and $\bsone_m$ denote the $m \times m$
zero matrix and the $m \times m$ unit matrix, respectively. Also, for all 
$a, b \in \bN_m$ consider the $m \times m$ elementary matrix $E_{a, b}$ 
defined with the entries
\be\label{E}
    (E_{a, b})_{c, d} = \delta_{a, c} \delta_{b, d}
    \qquad
    (c, d \in \bN_m).
\ee
As is known, they obey the commutation relations
\be\label{E_commut_rel}
    E_{a, b} E_{c, d} = \delta_{b, c} E_{a, d}
    \qquad
    (a, b, c, d \in \bN_m).
\ee
Next, in the real matrix algebra $\bR^{m \times m}$, for each $k \in \bZ$ 
we introduce a distinguished subspace
\be
    (\bR^{m \times m})_k = 
    \begin{cases}
        \text{span}
            \SET{E_{a, b}}{a, b \in \bN_m \text{ and } b - a = k},  &
            \text{if $\ABS{k} < m$;} \\
        \{ \bszero_m \},                                            &
            \text{if $\ABS{k} \geq m$.}
    \end{cases}
\ee
As a vector space decomposition, we clearly have
\be\label{decomp}
    \bR^{m \times m} = \bigoplus_{k \in \bZ} (\bR^{m \times m})_k
    = \bigoplus_{k = -m + 1}^{m - 1} (\bR^{m \times m})_k.
\ee
Regarding its compatibility with the matrix multiplication, from the above
commutation relations \eqref{E_commut_rel} it arises that
\be\label{subspaces&multiplication}
    (\bR^{m \times m})_k \cdot (\bR^{m \times m})_l
    \subseteq
    (\bR^{m \times m})_{k + l}
    \qquad
    (k, l \in \bZ),
\ee
so decomposition \eqref{decomp} is actually a $\bZ$-grading. Sometimes 
it is called the standard grading, or the principal gradation of 
the matrix algebra $\bR^{m \times m}$.

Note that, on account of \eqref{decomp}, each $m \times m$ matrix $X$ can be 
uniquely decomposed as
\be\label{X_decomp}
    X = \sum_{k \in \bZ} X_k = \sum_{k = -m + 1}^{m - 1} X_k, 
\ee
where $X_k \in (\bR^{m \times m})_k$. It is plain that $X_0$ is the diagonal 
part of $X$, and in general $X_k$ may have nontrivial entries only on its 
$\ABS{k}$th superdiagonal or subdiagonal, depending on the sign of $k$. In 
association with the $\bZ$-grading \eqref{decomp} we find it convenient to 
introduce the shorthand notations
\be\label{X_triang}
    X_{\leq 0} = \sum_{k \leq 0} X_k 
    \midand
    X_{\geq 0} = \sum_{k \geq 0} X_k
\ee
for the lower triangular and the upper triangular part of $X$, 
respectively. Relatedly, for the strictly lower triangular and 
the strictly upper triangular part of $X$ we shall write
\be\label{X_strictly_triang}
    X_{< 0} = \sum_{k < 0} X_k 
    \midand
    X_{> 0} = \sum_{k > 0} X_k.
\ee
As a further piece of preparation, let us recall that a square matrix is 
called a unit (lower or upper) triangular matrix, if it is (lower or upper) 
triangular with ones on the main diagonal.

We conclude this subsection with an important remark. In the rest of this 
section we shall employ the above notations only for $n \times n$ matrices. 
However, starting from Section~\ref{SECTION:Solution_algo}, we shall apply 
the notations \eqref{X_decomp}, \eqref{X_triang} and \eqref{X_strictly_triang}
exclusively for $(2 n) \times (2 n)$ matrices.

\subsection{Lax triad}\label{SUBSECTION:Lax_triad}
As we have already stipulated, till the end of this section the notations 
introduced in subsection~\ref{SUBSECTION:algebraic_preliminaries} are in 
effect only for quadratic matrices of size $n$. In particular, $E_{a, b}$ 
stands for the $n \times n$ elementary matrix featuring $1$ in the 
intersection of the $a$th row with the $b$th column and $0$ everywhere else. 
We shall also frequently encounter the $n \times n$ reversal matrix, which 
is the distinguished permutation matrix
\be\label{R}
    R 
    = \sum_{c = 1}^n E_{c, n + 1 - c}
    = \begin{bmatrix}
        0           & \cdots    & 0         & 1             \\
        0           & \cdots    & 1         & 0             \\
        \vdots      & \adots    & \vdots    & \vdots        \\
        1           & \cdots    & 0         & 0
    \end{bmatrix}.
\ee
Since $R^2 = \bsone_n$, the matrix $R$ is invertible, and
\be\label{R_inv}
    R^{-1} = R = R^T.
\ee
It is also evident that $\forall a, b \in \bN_n$ we have
\be\label{R&E}
    R E_{a, b} = E_{n + 1 - a, b}
    \midand
    E_{a, b} R = E_{a, n + 1 - b},
\ee
and so
\be\label{RER}
    R E_{a, b} R^{-1} = R E_{a, b} R = E_{n + 1 -a, n + 1 - b}.
\ee
Incidentally, by virtue of the above relations it is plain that 
$\forall k \in \bZ$ we have
\be\label{R&grading}
    R (\bR^{n \times n})_k R^{-1}
    = R (\bR^{n \times n})_k R
    = (\bR^{n \times n})_{-k}.
\ee
In particular, the $R$-conjugate of a lower triangular matrix is upper 
triangular, and vice versa.

Having all the necessary algebraic objects at our disposal, utilizing the 
functions defined in \eqref{x_a} and \eqref{Delta} we now build up the 
$n \times n$ invertible diagonal matrix
\be\label{bD}
    \bD 
    = \sum_{c = 1}^{n - 1} x_c^\half E_{c, c} 
        + x_n^\half \Delta^{-\quarter} E_{n, n}.
\ee
With the aid of its derivative along the Hamiltonian vector field 
$\bsX_H$~\eqref{bsX_H}, we also define
\be\label{cD}
    \cD = \bD^{-1} \bsX_H[\bD].
\ee
It is plain that
\be
    \cD = \half \sum_{c = 1}^{n - 1} x_c^{-1} \bsX_H[x_c] E_{c, c}
            + \Big( \half x_n^{-1} \bsX_H[x_n] 
                    - \quarter \Delta^{-1} \bsX_H[\Delta] \Big) E_{n, n}, 
\ee
so the first $n - 1$ diagonal entries of $\cD$ are under complete control,
due to \eqref{x_a_der} and the explicit formulas given in 
Lemma~\ref{LEMMA:bsX_H_q&p}. Indeed, straightforward calculations show that
\be\label{cD_1_1}
    \cD_{1, 1} = \quarterbeta (x_1^{-1} + x_2) w_1,
\ee
whereas for $2 \leq a \leq n - 1$ we have
\be\label{cD_a_a}
    \cD_{a, a} = \quarterbeta
                    \BR{-(x_{a - 1}^{-1} + x_a) w_{a - 1}
                        + (x_a^{-1} + x_{a + 1}) w_a
                        + \kappa w_{n - 1} w_n \delta_{a, n - 1}}.
\ee
However, the last diagonal entry of $\cD$ requires a more careful treatment. 
For this reason, we find it convenient to introduce the shorthand notations
\begin{align}
    & \Gamma = x_n(1 + w_{n - 1}) \Delta^{-1} + x_n^{-1} (1 + w_n), 
        \label{Gamma} \\
    & \Upsilon = - x_n(1 + w_{n - 1}) \Delta^{-1} + x_n^{-1} (1 + w_n).
        \label{Upsilon}
\end{align}
Now, looking back to \eqref{bsX_H_w_n}, we see that
\be\label{bsX_H_w_n_OK}
    \bsX_H[w_n] = \beta w_n \Delta \Upsilon,
\ee
and so from the explicit form of $\Delta$~\eqref{Delta} we deduce
that
\be\label{bsX_H_Delta}
    \Delta^{-1} \bsX_H[\Delta] = \beta \kappa^2 w_n \Upsilon.
\ee
Therefore, giving a glance at \eqref{x_a_der} and Lemma~\ref{LEMMA:bsX_H_q&p}, 
for the remaining diagonal entry of $\cD$~\eqref{cD} we can cook up the 
formula
\be\label{cD_n_n}
    \cD_{n, n} = \frac{\beta}{4} 
                    \BR{-(x_{n - 1}^{-1} + x_n) w_{n - 1} 
                        + 2 x_n^{-1} w_n \Delta 
                        + 4 \kappa w_n + \kappa w_{n - 1} w_n 
                        + \kappa^2 w_n \Gamma}.
\ee

To proceed, let us also introduce the $n \times n$ strictly lower triangular 
matrix
\be\label{bW}
    \bW = \sum_{c = 1}^{n - 1} w_c^\half E_{c + 1, c}.
\ee
By acting on it with $\bsX_H$ we get
\be\label{bsX_H_bW}
    \bsX_H[\bW]
    = \half \sum_{c = 1}^{n - 1} w_c^{-\half} \bsX_H[w_c] E_{c + 1, c}.
\ee
Note that each entry of the above matrix can be made fairly 
explicit by exploiting the equations highlighted in \eqref{bsX_H_w_1}, 
\eqref{bsX_H_w_a} and \eqref{bsX_H_w_n-1}.

So far we have encountered only with quadratic matrices of size $n$. 
However, in our theory the $(2 n) \times (2 n)$ matrices partitioned 
into four $n \times n$ blocks play a more prominent role. Accordingly, by
a $(2 n) \times (2 n)$ block matrix we shall always mean a 
$(2 n) \times (2 n)$ matrix partitioned into four blocks of equal size.
Thus, it proves handy to introduce the shorthand notation
\be\label{N}
    N = 2 n.
\ee
What is more important, based upon the $n \times n$ matrices $\bD$~\eqref{bD} 
and $\bW$~\eqref{bW}, we now construct the $N \times N$ block matrices
\be\label{D&W}
    D = \BLOCKMAT{\bD}{\bszero_n}{\bszero_n}{R \bD^{-1} R}
    \midand
    W = \BLOCKMAT{\bW}{\bszero_n}{w_n^\half E_{1, n}}{-R \bW^t R},
\ee
that allows us to introduce one of the most important objects in our study 
by the invertible lower bidiagonal matrix
\be\label{L}
    L = D (\bsone_N + W).
\ee
As will transpire, this matrix do play a distinguished role in the 
construction of a Lax pair for the deformed relativistic Toda model 
\eqref{H}. To this end, we shall need two more matrix valued functions.

First, we introduce $n$ auxiliary smooth functions by setting
\begin{align}
    & \phi_1 = \quarterbeta \BR{x_2 w_1 - x_1 - x_1^{-1}}, 
    \label{phi_1} \\
    & \phi_n = \quarterbeta 
                \BR{x_{n - 1}^{-1} w_{n - 1} 
                    - x_n 
                    - x_n^{-1} (1 - w_n) \Delta
                    + \kappa w_{n - 1} w_n},
    \label{phi_n}
\end{align}
whereas for $2 \leq a \leq n - 1$ we define
\be\label{phi_a}
    \phi_a = \quarterbeta 
                \BR{x_{a - 1}^{-1} w_{a - 1} 
                    + x_{a + 1} w_a - x_a - x_a^{-1}
                    + \kappa w_{n - 1} w_n \delta_{a, n - 1}}.
\ee 
Now, utilizing these functions, let us consider the $N \times N$ block matrix
\be\label{A}
    A = \BLOCKMAT{A^{++}}{A^{+-}}{A^{-+}}{R \hat{A}^{--} R},
\ee
where the diagonal blocks are defined by the $n \times n$ tridiagonal matrices
\begin{align}
    A^{++} =    
    & \sum_{c = 1}^{n - 1} \phi_c E_{c, c}
        + \Big( 
            \phi_n + \halfbeta \kappa w_n + \quarterbeta \kappa^2 w_n \Gamma 
        \Big) E_{n, n} \nonumber \\
    & + \halfbeta 
        \sum_{c = 1}^{n - 1}
            w_c^\half \big( x_c^{-1} E_{c, c + 1} - x_c E_{c + 1, c} \big),
    \label{A++} \\
    \hat{A}^{--} = 
    & \sum_{c = 1}^{n - 1} \phi_c E_{c, c}
        + \Big( 
            \phi_n - \halfbeta \kappa w_n -\quarterbeta \kappa^2 w_n \Gamma 
        \Big) E_{n, n} \nonumber \\ 
    & + \halfbeta 
        \sum_{c = 1}^{n - 2} 
            w_c^\half
            \big( x_{c + 1}^{-1} E_{c, c + 1} - x_{c + 1} E_{c + 1, c} \big) 
        \nonumber \\
    & + \halfbeta w_{n - 1}^\half 
            \big( x_n^{-1} \Delta E_{n - 1, n} - x_n E_{n, n - 1} \big),
    \label{hatA--}
\end{align}
whilst the off-diagonal blocks are given by the $n \times n$ spare matrices
\begin{align}
    A^{+-} =    &   \halfbeta w_n^\half 
                        \BR{ x_n^{-1} \Delta + \kappa (1 + w_{n - 1}) } 
                        E_{n, 1} 
                    + \halfbeta \kappa w_{n - 1}^\half w_n^\half E_{n, 2},
    \label{A+-} \\
    A^{-+} =    &   - \halfbeta w_n^\half 
                        \BR{ x_n + \kappa (1 + w_n) } E_{1, n}
                    + \halfbeta \kappa w_{n - 1}^\half w_n^\half E_{2, n}. 
    \label{A-+}
\end{align}

Second, in order to save space, we introduce the shorthand notations
\begin{align}
    & \psi_1 = - \quarterbeta \BR{x_1 + x_1^{-1} (1 - w_1)},
    \label{psi_1} \\
    & \psi_n = - \quarterbeta \BR{x_n (1 - w_{n - 1}) 
                                    + x_n^{-1} (1 - w_n) \Delta 
                                    - 2 \kappa w_{n - 1} w_n},
    \label{psi_n}
\end{align}
whereas for $2 \leq a \leq n - 1$ we define    
\be\label{psi_a}
    \psi_a = -\quarterbeta \BR{x_a (1 - w_{a - 1}) + x_a^{-1} (1 - w_a)}.
\ee
Making use of the above smooth functions, let us also consider the 
$N \times N$ block matrix
\be\label{B}
    B = \BLOCKMAT{B^{++}}{B^{+-}}{- R B^{+-} R}{R B^{++} R},
\ee
where the pertinent blocks are built upon the $n \times n$ matrices
\begin{align}
    B^{++} =    &   \sum_{c = 1}^n \psi_c E_{c, c}
                    + \halfbeta 
                        \sum_{c = 1}^{n - 2}
                            w_c^\half
                                \big(
                                    x_c^{-1} E_{c, c + 1}
                                    -x_{c + 1} E_{c + 1, c}
                                \big) 
                \nonumber \\
                &   + \halfbeta w_{n - 1}^\half
                        \big(
                            x_{n - 1}^{-1} E_{n - 1, n}
                            -(x_n + \kappa w_n) E_{n, n - 1}
                        \big),
    \label{B++} \\
    B^{+-} =    &   \halfbeta (x_n^{-1} \Delta + \kappa) w_n^\half E_{n, 1} 
                    + \halfbeta \kappa w_{n - 1}^\half w_n^\half E_{n, 2}.
    \label{B+-}
\end{align}
Notice that the diagonal block \eqref{B++} is tridiagonal. Therefore, by 
inspecting the off-diagonal blocks \eqref{A+-}, \eqref{A-+} and \eqref{B+-}, 
too, it is clear that in the most interesting cases characterized by the 
inequality $\kappa > 0$, both $A$~\eqref{A} and $B$~\eqref{B} are 
\emph{pentadiagonal} matrices.

Finally, to be able to formulate one of the main results of the paper, from
$D$~\eqref{D&W} and $A$~\eqref{A} we construct the $N \times N$ matrix
\be\label{cA}
    \cA = D A D^{-1}, 
\ee
which is also pentadiagonal for $\kappa > 0$.

\begin{THEOREM}\label{THEOREM:Lax_triad}
The triple $(L, \cA, B)$ obeys the equation
\be\label{triad}
    \bsX_H[L] = \cA L - L B.
\ee
\end{THEOREM}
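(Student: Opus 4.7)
The plan is to reduce the identity $\bsX_H[L] = \cA L - LB$ to an equivalent form that can be verified blockwise. Since $L = D(\bsone_N + W)$ and $\cA = D A D^{-1}$, applying $\bsX_H$ with the Leibniz rule and left-multiplying by $D^{-1}$ transforms the claim into
$$
    \mathcal{E} (\bsone_N + W) + \bsX_H[W]
    = A (\bsone_N + W) - (\bsone_N + W) B,
$$
where $\mathcal{E} := D^{-1} \bsX_H[D]$ is a diagonal $N \times N$ matrix whose upper block is $\cD$ from \eqref{cD} and whose lower block, by a short calculation using \eqref{D&W}, equals $-R \cD R$. The point of this reformulation is that every entry admits a polynomial expression in $x_a^{\pm 1}$, $w_a$, and $\Delta$ via Lemmas~\ref{LEMMA:H}, \ref{LEMMA:bsX_H_q&p} and equations \eqref{bsX_H_w_1}--\eqref{bsX_H_w_n}, so the identity can be checked entry by entry.

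Second, I would exploit the $2 \times 2$ block structure inherited from \eqref{D&W}, \eqref{A}, and \eqref{B}. Because the upper-right block of $W$ vanishes, the matrix identity splits naturally into four $n \times n$ identities. For the interior rows and columns (indices $a, b \leq n - 1$) of the $(+, +)$ block, one recovers essentially the familiar Lax triad relation for the open $A$-type relativistic Toda chain, verified directly from the formulas for $\phi_a, \psi_a, \cD_{a,a}$ together with \eqref{bsX_H_w_1} and \eqref{bsX_H_w_a}. The $(-,-)$ block identity then follows from the $(+,+)$ one by the $R$-conjugation symmetry hard-coded into the definitions \eqref{D&W}--\eqref{B}, together with \eqref{R&grading}.

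Third, the genuine novelty of the result, and the main technical obstacle, lies in the boundary entries and in the off-diagonal blocks, where the one-sided $1$-parameter interaction appears. Matching the off-diagonal blocks requires precisely the $E_{1,n}$ and $E_{n,1}$ contributions in $A^{+-}, A^{-+}, B^{+-}$, and yields conditions involving $w_n$, $\kappa w_n$, $\Delta$, and $\kappa w_{n-1} w_n$ that must balance through \eqref{bsX_H_w_n_OK}. The most delicate check is at the $(n,n)$ entry of the diagonal blocks, where the $\Delta^{-\quarter}$ factor in $\bD$~\eqref{bD} produces the extra term $-\quarter \Delta^{-1} \bsX_H[\Delta]$ inside $\cD_{n,n}$; using \eqref{cD_n_n}, \eqref{bsX_H_Delta}, the identity $\Delta - \kappa^2 w_n = 1$, and the shorthands $\Gamma, \Upsilon$ from \eqref{Gamma}--\eqref{Upsilon}, one verifies that the $\kappa^2 w_n \Gamma$ corrections in $A^{++}$ and $\hat{A}^{--}$, together with the $(x_n^{-1} \Delta + \kappa)$-type factors in $B^{+-}$, are precisely tuned to close the equation. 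Away from the boundary the calculation is routine; it is this boundary bookkeeping that encapsulates the $\kappa > 0$ novelty of the theorem.
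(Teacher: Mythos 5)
Your overall strategy is the same as the paper's: left-multiply by $D^{-1}$, reduce the claim to the vanishing of
$\mathcal{E}(\bsone_N + W) + \bsX_H[W] - A(\bsone_N + W) + (\bsone_N + W)B$ with $\mathcal{E} = D^{-1}\bsX_H[D] = \mathrm{diag}(\cD, -R\cD R)$, and then check the four $n\times n$ blocks entry by entry (organized by the grading \eqref{X_decomp}) using the polynomial expressions in $x_a^{\pm 1}$, $w_a$, $\Delta$ from Lemmas~\ref{LEMMA:H}, \ref{LEMMA:bsX_H_q&p} and \eqref{bsX_H_w_1}--\eqref{bsX_H_w_n}. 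Your identification of where the genuinely new work lies (the $(n,n)$ entries, the $\Delta^{-1/4}$ gauge in $\bD$, and the off-diagonal blocks balanced through \eqref{bsX_H_w_n_OK} and \eqref{bsX_H_Delta}) is accurate.

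There is, however, one step that does not go through as stated: your claim that the $(-,-)$ block identity ``follows from the $(+,+)$ one by the $R$-conjugation symmetry hard-coded into the definitions.'' No such symmetry reduces one block to the other. While $B$ is genuinely $\Omega$-symmetric, the lower-right block of $A$ is built from $\hat{A}^{--}$~\eqref{hatA--}, which is \emph{not} obtained from $A^{++}$~\eqref{A++} by any conjugation or transposition: the off-diagonal entries carry shifted indices ($x_{c+1}^{\pm 1}$ versus $x_c^{\pm 1}$) and an extra factor of $\Delta$ in the $(n-1,n)$ position, and the $(n,n)$ entries differ in the sign of the $\halfbeta\kappa w_n + \quarterbeta\kappa^2 w_n\Gamma$ correction. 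Moreover, the corner entry $w_n^{1/2}E_{1,n}$ of $W$ feeds into the two diagonal blocks through structurally different correction terms ($-w_n^{1/2}A^{+-}_{n,1}E_{n,n}$ in the $(+,+)$ block versus $w_n^{1/2}E_{n,n}B^{+-}R$ in the $(-,-)$ block), and conjugating the whole triad by $\Omega$ replaces $L$ by $D^{-1}(\bsone_N - W^t)$ and $\cA$ by $g\cA g^{-1} + \bsX_H[g]g^{-1}$ (cf.\ Proposition~\ref{PROPOSITION:cA&B&g&Omega}), so it yields a different equation rather than the $(-,-)$ block of the original one. The $(-,-)$ block therefore requires a separate direct verification. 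It is entirely analogous in method to the $(+,+)$ block you already carry out, so the fix is routine, but as written this portion of your argument is a gap rather than a corollary of symmetry.
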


\begin{proof}
The verification of \eqref{triad} turns out to be quite laborious, based upon 
the careful study of the $N \times N$ matrix 
\be\label{Xi_def}
    \Xi = D^{-1} \BR{ \bsX_H[L] - (\cA L - L B) }.
\ee
Remembering \eqref{L} and \eqref{cA}, Leibniz rule yields
\be\label{Xi}
    \Xi = D^{-1} \bsX_H[D] (\bsone_N + W) + \bsX_H[W] 
            - A (\bsone_N + W) + (\bsone_N + W) B.
\ee
Now, by inspecting the constituent objects, from \eqref{cD} and \eqref{D&W}
we find
\be\label{bsX_H_D&W}
    D^{-1} \bsX_H[D] = \BLOCKMAT{\cD}{\bszero_n}
                                    {\bszero_n}{-R \cD R}
    \midand
    \bsX_H[W] = \BLOCKMAT{\bsX_H[\bW]}{\bszero_n}
                            {\bsX_H[w_n^\half] E_{1, n}}{-R \bsX_H[\bW]^t R}.
\ee
Therefore, recalling the structure of $A$~\eqref{A} and 
$B$~\eqref{B}, for the block matrix decomposition of $\Xi$~\eqref{Xi}
we obtain
\be\label{Xi_block_mat}
    \Xi = \BLOCKMAT{\Xi^{++}}{\Xi^{+-}}{\Xi^{-+}}{R \hat{\Xi}^{--} R},
\ee
where the diagonal blocks are built up form the $n \times n$ matrices
\begin{align}
    \Xi^{++} = 
        & \cD + \cD \bW + \bsX_H[\bW] - A^{++} - A^{++} \bW 
        \nonumber \\
        & -w_n^\half A^{+-} E_{1, n} + B^{++} + \bW B^{++},
        \label{Xi++} \\
    \hat{\Xi}^{--} = 
        & -\cD + \cD \bW^t - \bsX_H[\bW]^t 
            - \hat{A}^{--} + \hat{A}^{--} \bW^t 
        \nonumber\\
        & + B^{++} - \bW^t B^{++} + w_n^\half E_{n, n} B^{+-} R,
    \label{hatXi--}
\end{align}
whereas the off-diagonal blocks are given by
\begin{align}
    \Xi^{+-} =  &   -A^{+-} + A^{+-} R \bW^t R + B^{+-} + \bW B^{+-},
    \label{Xi+-} \\
    \Xi^{-+} =  &   -w_n^\half R \cD E_{n, n} + \bsX_H[w_n^\half] E_{1, n} 
                    -A^{-+} - A^{-+} \bW
                \nonumber \\ 
                &   - w_n^\half R \hat{A}^{--} E_{n, n} 
                    - R B^{+-} R + R \bW^t B^{+-} R
                    + w_n^\half E_{1, n} B^{++}.
    \label{Xi-+}
\end{align}
Now our plan is obvious: we are to demonstrate that each block in 
\eqref{Xi_block_mat} vanishes. However, since we have explicit formulas
for the entries of $\cD$~\eqref{cD}, $\bW$~\eqref{bW} and 
$\bsX_H[\bW]$~\eqref{bsX_H_bW}, as well as for the entries of $A$~\eqref{A}
and $B$~\eqref{B}, our task boils down to elementary algebraic manipulations.

Starting with the study of the block $\Xi^{++}$~\eqref{Xi++}, first note that
\be\label{Xi++_term6}
    A^{+-} E_{1, n} = A^{+-}_{n, 1} E_{n, n}.
\ee
Recall also that $\cD$~\eqref{cD} is diagonal, $\bW$~\eqref{bW} has 
non-trivial entries only in its first subdiagonal, whereas matrices 
$A^{++}$~\eqref{A++} and $B^{++}$~\eqref{B++} are tridiagonal. Therefore, 
invoking the decomposition \eqref{X_decomp}, we can write
\be\label{Xi++_decomp}
    \Xi^{++} = \Xi^{++}_{-2} + \Xi^{++}_{-1} + \Xi^{++}_0 + \Xi^{++}_1,
\ee
where
\be\label{Xi++_part_-2&1}
    \Xi^{++}_{-2} = -A^{++}_{-1} \bW + \bW B^{++}_{-1}
    \midand
    \Xi^{++}_1 = - A^{++}_1 + B^{++}_1,
\ee
meanwhile the remaining two terms are given by the more complicated
expressions 
\begin{align}
    & \Xi^{++}_{-1} = \cD \bW + \bsX_H[\bW] - A^{++}_{-1} - A^{++}_0 \bW 
                        + B^{++}_{-1} + \bW B^{++}_0,
    \label{Xi++_part_-1} \\
    & \Xi^{++}_0 = \cD - A^{++}_0 - A^{++}_1 \bW 
                    - w_n^\half A^{+-}_{n, 1} E_{n, n} 
                    + B^{++}_0 + \bW B^{++}_1.
    \label{Xi++_part_0} 
\end{align}
By inspecting the simpler looking formulas displayed in 
\eqref{Xi++_part_-2&1}, from \eqref{A++}, \eqref{B++} and \eqref{bW} we find
\be
    A^{++}_{-1} \bW = \bW B^{++}_{-1} =
    -\halfbeta \sum_{c = 1}^{n - 2} 
                    x_{c + 1} w_{c + 1}^\half w_c^\half E_{c + 2, c}.
\ee
It is also clear that $A^{++}_1 = B^{++}_1$, so we deduce immediately that
\be
    \Xi^{++}_{-2} = \Xi^{++}_1 = \bszero_n.
\ee
Turning to the much more involved formula given in \eqref{Xi++_part_-1}, 
notice that for all $a \in \bN_{n - 2}$ we can write
\be
    w_a^{-\half} \Xi^{++}_{a + 1, a} =
    \cD_{a + 1, a+ 1} + w_a^{-\half} \bsX_H[w_a^\half] 
        + \halfbeta (x_a - x_{a + 1}) - \phi_{a + 1} + \psi_a,
\ee
whereas
\be
    w_{n - 1}^{-\half} \Xi^{++}_{n, n - 1} = 
        \cD_{n, n} + w_{n - 1}^{-\half} \bsX_H[w_{n - 1}^\half]
        + \halfbeta (x_{n - 1} - x_n - \kappa w_n) - A^{++}_{n, n} 
        + \psi_{n - 1}.
\ee
Thus, by combining \eqref{bsX_H_w_1}, \eqref{bsX_H_w_a} and 
\eqref{bsX_H_w_n-1} with the explicit formulas given in the first half of 
this subsection, one may easily recognize that 
\be
    \Xi^{++}_{-1} = \bszero_n.
\ee 
Finally, by inspecting the diagonal matrix $\Xi^{++}_0$~\eqref{Xi++_part_0}, 
we see that
\begin{align}
    & \Xi^{++}_{1, 1} = 
        \cD_{1, 1} - \phi_1 - A^{++}_{1, 2} w_1^\half + \psi_1 = 0,
    \label{Xi++_1_1} \\
    & \Xi^{++}_{n, n} = 
        \cD_{n, n} - A^{++}_{n, n} - w_n^\half A^{+-}_{n, 1}
        + \psi_n + w_{n - 1}^\half B^{++}_{n - 1, n} = 0, 
\end{align}
whereas for $2 \leq a \leq n - 1$ we also have
\be
    \Xi^{++}_{a, a} = 
        \cD_{a, a} - \phi_a - A^{++}_{a, a + 1} w_a^\half 
        + \psi_a + w_{a - 1}^\half B^{++}_{a - 1, a} = 0.
\ee
As a consequence, we can write $\Xi^{++}_0 = \bszero_n$.  So, in the graded
decomposition \eqref{Xi++_decomp} each homogeneous term vanishes, whence 
$\Xi^{++} = \bszero_n$.

Having grasped the idea of the proof, for the remaining three blocks we
shall highlight only the most essential steps. Turning to \eqref{hatXi--},
for the last term on the right hand side we can write
\be
    E_{n, n} B^{+-} R = B^{+-}_{n, 1} E_{n, n} + B^{+-}_{n, 2} E_{n, n - 1}.
\ee
Consequently, the graded decomposition \eqref{X_decomp} of the block 
$\hat{\Xi}^{--}$ takes the form
\be\label{hatXi--_decomp}
    \hat{\Xi}^{--} = 
        \hat{\Xi}^{--}_{-1} + \hat{\Xi}^{--}_0 
        + \hat{\Xi}^{--}_1 + \hat{\Xi}^{--}_2,
\ee
where
\be\label{hatXi--_part_-1&2} 
    \hat{\Xi}^{--}_{-1} = 
        -\hat{A}^{--}_{-1} + B^{++}_{-1} 
        + w_n^\half B^{+-}_{n, 2} E_{n, n - 1}
    \midand
    \hat{\Xi}^{--}_2 = \hat{A}^{--}_1 \bW^t - \bW^t B^{++}_1,
\ee
meanwhile
\begin{align}
    & \hat{\Xi}^{--}_0 =
        -\cD - \hat{A}^{--}_0 + \hat{A}^{--}_{-1} \bW^t + B^{++}_0
        - \bW^t B^{++}_{-1} + w_n^\half B^{+-}_{n, 1} E_{n, n}, 
    \label{hatXi--_part_0} \\
    & \hat{\Xi}^{--}_1 = 
        \cD \bW^t - \bsX_H[\bW]^t - \hat{A}^{--}_1 
        + \hat{A}^{--}_0 \bW^t + B^{++}_1 - \bW^t B^{++}_0.
    \label{hatXi--_part_1} 
\end{align}
Along the same lines as in the analysis of $\Xi^{++}$, one can easily
verify that actually each term in the decomposition \eqref{hatXi--_decomp} 
vanishes, and so $\hat{\Xi}^{--} = \bszero_n$ follows at once.

As concerns the off-diagonal block $\Xi^{+-}$~\eqref{Xi+-}, from 
\eqref{E_commut_rel}, \eqref{R&E} and \eqref{bW} we see
\be\label{Xi+-_term2&4}
    A^{+-} R \bW^t R = w_{n - 1}^\half A^{+-}_{n, 2} E_{n, 1} 
    \midand 
    \bW B^{+-} = \bszero_n.
\ee
Plugging this observation into \eqref{Xi+-}, from the matrix entries 
appearing in \eqref{A+-} and \eqref{B+-} we find immediately that
\be\label{Xi+-_OK}
    \Xi^{+-} = ( B^{+-}_{n, 1} - A^{+-}_{n, 1} 
                    + w_{n - 1}^\half A^{+-}_{n, 2} ) E_{n, 1} 
                + ( B^{+-}_{n, 2} -A^{+-}_{n, 2} ) E_{n, 2}
        = \bszero_n. 
\ee

Finally, by performing the indicated matrix multiplications in \eqref{Xi-+},
straightforward calculations lead to
\be
\begin{split}
    \Xi^{-+} =  
    & \big( w_n^\half 
            (-\cD_{n, n} - \hat{A}^{--}_{n, n} + \psi_n) + \bsX_H[w_n^\half] 
            - A^{-+}_{1, n} - B^{+-}_{n, 1} 
    \big) E_{1, n} \\
    & - \big( A^{-+}_{2, n} + w_n^\half \hat{A}^{--}_{n - 1, n}
                - B^{+-}_{n, 1} w_{n - 1}^\half \big) E_{2, n} \\ 
    & - \big( A^{-+}_{1, n} w_{n - 1}^\half + B^{+-}_{n, 2} 
                - w_n^\half B^{++}_{n, n - 1} \big) E_{1, n - 1} \\
    & - w_{n - 1}^\half (A^{-+}_{2, n} - B^{+-}_{n, 2}) E_{2, n - 1}.
\end{split}
\ee
Now, looking back to the explicit form of the matrix entries, we arrive
at $\Xi^{-+} = \bszero_n$. 

To sum up, we see that each block of $\Xi$~\eqref{Xi_block_mat} vanishes. 
Therefore, turning back to the defining equation \eqref{Xi}, the Theorem 
follows.
\end{proof}

At this point we are in a position to compare the content of 
Theorem~\ref{THEOREM:Lax_triad} with the related earlier works on 
Ruijsenaars' $n$-particle open relativistic Toda chain. In this 
respect the most important contribution is due to Suris \cite{S91}, 
who constructed certain $n \times n$ bidiagonal matrices obeying 
equations analogous to \eqref{triad}. Following Suris' terminology,
we shall call \eqref{triad} a `Lax triad'. Admittedly, the bidiagonal
matrix defined in equation (11) in \cite{S91} provided a firm
guide in constructing our lower bidiagonal matrix $L$~\eqref{L}.
However, finding the correct form of $\cA$~\eqref{cA} and $B$~\eqref{B} 
required a much greater effort on our part. Indeed, while the analogous 
matrices $A_\pm$ and $B_\pm$ given below equations (13) and (14) in 
\cite{S91} are bidiagonal, our matrices $\cA$ and $B$ are pentadiagonal 
for $\kappa > 0$. 

Due to Theorem~\ref{THEOREM:Lax_triad}, matrix $\cA$ defined in \eqref{cA} is 
a key player in our study of the deformed relativistic Toda system \eqref{H}. 
Partly for this reason it is worthwhile to make it as explicit as possible. 
To this end, by conjugating $A$~\eqref{A} with $D$~\eqref{D&W} we obtain
the block matrix form
\be\label{cA_explicit}
    \cA = \BLOCKMAT{\cA^{++}}{\cA^{+-}}{\cA^{-+}}{R \hat{\cA}^{--} R},
\ee
where
\begin{align}
    \cA^{++} =  &   \sum_{c = 1}^{n - 1} \phi_c E_{c, c}
                    + \Big( 
                        \phi_n + \halfbeta \kappa w_n 
                            + \quarterbeta \kappa^2 w_n \Gamma 
                        \Big) E_{n, n} 
                \nonumber \\
                &   + \halfbeta 
                        \sum_{c = 1}^{n - 2} 
                            w_c^\half 
                            \big( 
                                x_c^{-\half} x_{c + 1}^{-\half} E_{c, c + 1}
                                - x_c^\half x_{c + 1}^\half E_{c + 1, c} 
                            \big)
                \nonumber \\
                &   + \halfbeta w_{n - 1}^\half 
                        \big( 
                            x_{n - 1}^{-\half} x_n^{-\half} 
                                \Delta^\quarter E_{n - 1, n}
                            - x_{n - 1}^\half x_n^\half 
                                \Delta^{-\quarter} E_{n, n - 1} 
                        \big),
    \label{cA++} \\
    \hat{\cA}^{--} =    &   \sum_{c = 1}^{n - 1} \phi_c E_{c, c}
                            + \Big( 
                                \phi_n - \halfbeta \kappa w_n 
                                - \quarterbeta \kappa^2 w_n \Gamma 
                                \Big) E_{n, n} 
                        \nonumber \\
                        &   + \halfbeta 
                                \sum_{c = 1}^{n - 2} 
                                    w_c^\half 
                                    \big( 
                                        x_c^{-\half} x_{c + 1}^{-\half} 
                                            E_{c, c + 1}
                                        - x_c^\half x_{c + 1}^\half 
                                            E_{c + 1, c} 
                                    \big)
                        \nonumber \\
                        &   + \halfbeta w_{n - 1}^\half 
                                \big( 
                                    x_{n - 1}^{-\half} x_n^{-\half} 
                                        \Delta^\frac{3}{4} E_{n - 1, n}
                                    - x_{n - 1}^\half x_n^\half 
                                        \Delta^\quarter E_{n, n - 1} 
                                \big),
    \label{hatcA--}
\end{align}
whereas the off-diagonal blocks are given by
\begin{align}
    \cA^{+-} = &    \halfbeta w_n^\half \Delta^\half 
                        \BR{ 1 + \kappa x_n (1 + w_{n - 1}) \Delta^{-1} } 
                            E_{n, 1}
                    + \halfbeta \kappa x_{n - 1}^\half x_n^\half 
                        w_{n - 1}^\half w_n^\half \Delta^{-\quarter} E_{n, 2},
    \label{cA+-} \\
    \cA^{-+} =&     - \halfbeta w_n^\half \Delta^\half 
                        \BR{ 1 + \kappa x_n^{-1} (1 + w_n) } E_{1, n}
                    + \halfbeta \kappa x_{n - 1}^{-\half} x_n^{-\half} 
                        w_{n - 1}^\half w_n^\half \Delta^\quarter E_{2, n}. 
    \label{cA-+}
\end{align}

\subsection{Lax equation}\label{SUBSECTION:Lax_equation}
Heading toward the Lax representation of the dynamics governed by 
$H$~\eqref{H}, we still need some preparations. Built upon the $n \times n$ 
reversal matrix $R$~\eqref{R}, we introduce the $N \times N$ matrix
\be\label{Omega}
    \Omega = \BLOCKMAT{\bszero_n}{R}{-R}{\bszero_n}.
\ee
It is plain that $\Omega^2 = -\bsone_N$, whence $\Omega$ is invertible and
\be\label{Omega_inv}
    \Omega^{-1} = -\Omega = \Omega^T.
\ee
Note that for any $N \times N$ block matrix
\be\label{X}
    X = \BLOCKMAT{X^{++}}{X^{+-}}{X^{-+}}{X^{--}},
\ee
partitioned into four $n \times n$ blocks, we have
\be\label{Omega_conj}
    \Omega X \Omega^{-1} = \BLOCKMAT{R X^{--} R}{-R X^{-+} R}
                                    {-R X^{+-} R}{R X^{++} R}. 
\ee
As concerns the relationship between the $\bZ$-grading \eqref{decomp} and
the inner automorphism provided by the $\Omega$-conjugation, from the
above relation and \eqref{R&grading} we infer
\be\label{Omega&grading}
    \Omega (\bR^{N \times N})_k \Omega^{-1} = (\bR^{N \times N})_{-k}
    \qquad
    (k \in \bZ). 
\ee
Consequently, the $\Omega$-conjugate of any upper triangular $N \times N$ 
matrix is lower triangular, and vice versa.

To continue, consider the block matrix
\be\label{g}
    g = \BLOCKMAT{g^{++}}{\kappa w_n^\half E_{n, 1}}
                    {\kappa w_n^\half E_{1, n}}{g^{--}}
\ee
with the $n \times n$ invertible diagonal blocks
\be\label{g++&g--}
    g^{++} = \sum_{c = 1}^{n - 1} E_{c, c} + \Delta^\half E_{n, n}
    \midand
    g^{--} = R g^{++} R = \Delta^\half E_{1, 1} + \sum_{c = 2}^n E_{c, c}.
\ee
Remembering $\Delta$~\eqref{Delta}, one can check that $g$ is 
an invertible matrix with inverse
\be\label{g_inv}
    g^{-1} = \BLOCKMAT{g^{++}}{-\kappa w_n^\half E_{n, 1}}
                        {-\kappa w_n^\half E_{1, n}}{g^{--}}.
\ee
Note that both $g$ and $g^{-1}$ are tridiagonal for $\kappa > 0$.

By inspecting the action of the Hamiltonian vector field 
$\bsX_H$~\eqref{bsX_H} on the matrix valued smooth function $g$, elementary 
matrix multiplication yields
\be
\label{bsX_H_g}
\begin{split}
    \bsX_H[g] g^{-1} = 
    & \big( \Delta^\half \bsX_H[\Delta^\half] 
            - \kappa^2 w_n^\half \bsX_H[w_n^\half] \big) 
        \BLOCKMAT{E_{n, n}}{\bszero_n}{\bszero_n}{E_{1, 1}} \\
    & + \kappa \big( \Delta^\half \bsX_H[w_n^\half] 
                    - w_n^\half \bsX_H[\Delta^\half] \big)
        \BLOCKMAT{\bszero_n}{E_{n, 1}}{E_{1, n}}{\bszero_n}.
\end{split}
\ee
Recalling $\Delta$~\eqref{Delta}, we see that the diagonal
entries are actually trivial, since
\be
    \Delta^\half \bsX_H[\Delta^\half] 
        - \kappa^2 w_n^\half \bsX_H[w_n^\half]
    = \half \bsX_H[\Delta] - \half \kappa^2 \bsX_H[w_n]
    = 0.
\ee
On account of \eqref{bsX_H_w_n_OK} and \eqref{bsX_H_Delta} it is also 
immediate that
\be
    \Delta^\half \bsX_H[w_n^\half] - w_n^\half \bsX_H[\Delta^\half] 
    = \half w_n^\half \Delta^\half 
        \BR{ w_n^{-1} \bsX_H[w_n] - \Delta^{-1} \bsX_H[\Delta] }
    = \halfbeta w_n^\half \Delta^\half \Upsilon,
\ee
whence we end up with the concise expression
\be\label{bsX_H_g_OK}
    \bsX_H[g] g^{-1} = \halfbeta \kappa w_n^\half \Delta^\half \Upsilon 
                        \BLOCKMAT{\bszero_n}{E_{n, 1}}{E_{1, n}}{\bszero_n}.   
\ee
Having equipped with matrices $\Omega$~\eqref{Omega} and $g$~\eqref{g}, we
are now in a position to formulate an important technical observation.

\begin{PROPOSITION}\label{PROPOSITION:cA&B&g&Omega}
The matrix valued functions $\cA$~\eqref{cA} and $B$~\eqref{B} enjoy the
following algebraic relationships:
\be\label{cA&B&g&Omega}
    \Omega \cA \Omega^{-1} = g \cA g^{-1} + \bsX_H[g] g^{-1}
    \midand
    \Omega B \Omega^{-1} = B.
\ee
\end{PROPOSITION}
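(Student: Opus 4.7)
The plan is to verify each of the two identities block by block, using the block decomposition formula \eqref{Omega_conj}. Both matrices at hand are $N \times N$ block matrices partitioned into four $n \times n$ blocks, so each side of the claimed equations is itself a block matrix, and the proposition amounts to four component identities per equation.

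The second claim $\Omega B \Omega^{-1} = B$ is essentially immediate from the structure of $B$ in \eqref{B}. Since its $(-,-)$ block equals $R B^{++} R$ and its $(-,+)$ block equals $-R B^{+-} R$, applying \eqref{Omega_conj} gives
\be
    \Omega B \Omega^{-1}
    = \BLOCKMAT{R(R B^{++} R) R}{-R(-R B^{+-} R) R}
                {-R B^{+-} R}{R B^{++} R}
    = B,
\ee
using $R^2 = \bsone_n$. For the first claim I would begin by computing $\Omega \cA \Omega^{-1}$ from \eqref{cA_explicit} and \eqref{Omega_conj}, which yields the block matrix with diagonal blocks $\hat{\cA}^{--}$ and $R \cA^{++} R$ and off-diagonal blocks $-R \cA^{-+} R$ and $-R \cA^{+-} R$. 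Next I would perform the block-by-block expansion of $g \cA g^{-1}$ using the explicit forms \eqref{g} and \eqref{g_inv}. Each of the four resulting blocks decomposes into a ``main'' term, in which only the diagonal pieces $g^{++}$, $g^{--}$ act on a single block of $\cA$, plus three ``correction'' terms involving the off-diagonal entries $\pm \kappa w_n^\half E_{n, 1}$ and $\pm \kappa w_n^\half E_{1, n}$ of $g$ and $g^{-1}$. Adding the contribution of $\bsX_H[g] g^{-1}$ from \eqref{bsX_H_g_OK}, which is supported exclusively in the off-diagonal blocks at entries $(n, 1)$ and $(1, n)$, the identity is then checked entry by entry.

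The main obstacle is the corner entry analysis of the diagonal $(+,+)$ and $(-,-)$ blocks. Away from the last row and column of each block, $g^{\pm\pm}$ acts trivially and the $E_{n, 1}$, $E_{1, n}$ pieces of $g$ and $g^{-1}$ contribute nothing, so the identity reduces to the evident equality $\cA^{++}_{i, j} = \hat{\cA}^{--}_{i, j}$ read off from \eqref{cA++} and \eqref{hatcA--}. At the $(n - 1, n)$ and $(n, n - 1)$ positions the single factor $g^{++}_{n, n} = \Delta^\half$ supplies exactly the extra $\Delta^\half$ that converts the $\Delta^\quarter$ of $\cA^{++}$ into the $\Delta^{3/4}$ of $\hat{\cA}^{--}$, and analogously for the other entry. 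The delicate case is $(n, n)$, where four nonzero contributions must combine: a main piece $\Delta \cdot \cA^{++}_{n, n}$, two cross terms produced by $\cA^{+-}_{n, 1}$ and $\cA^{-+}_{1, n}$, and a fourth term $-\kappa^2 w_n \hat{\cA}^{--}_{n, n}$ coming from the identity $E_{n, 1} R \hat{\cA}^{--} R E_{1, n} = \hat{\cA}^{--}_{n, n} E_{n, n}$, which itself rests on $E_{n, 1} R = R E_{1, n} = E_{n, n}$. Using the crucial relation $\Delta - \kappa^2 w_n = 1$ from \eqref{Delta}, together with the explicit form of $\Gamma$ in \eqref{Gamma}, the sum collapses to $\phi_n - \halfbeta \kappa w_n - \quarterbeta \kappa^2 w_n \Gamma = \hat{\cA}^{--}_{n, n}$, as required. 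For the off-diagonal blocks the analysis is analogous but simpler: there the contribution $\halfbeta \kappa w_n^\half \Delta^\half \Upsilon$ coming from $\bsX_H[g] g^{-1}$ at position $(n, 1)$, respectively $(1, n)$, is precisely what is needed to account, via the definition \eqref{Upsilon} of $\Upsilon$, for the $\kappa$-linear discrepancy between $\cA^{+-}$ and $-R \cA^{-+} R$.
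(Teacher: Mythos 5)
Your proposal is correct and follows essentially the same route as the paper, which likewise forms $\Lambda = \Omega \cA \Omega^{-1} - g \cA g^{-1} - \bsX_H[g] g^{-1}$, expands it into four $n \times n$ blocks via \eqref{Omega_conj}, and shows each block vanishes entry by entry using $E_{n,1} R = R E_{1,n} = E_{n,n}$, the relation $\Delta - \kappa^2 w_n = 1$, and the explicit entries of $\cA$. One small caveat: your description of the $(n-1,n)$ and $(n,n-1)$ entries (handled by the single factor $\Delta^{\half}$) is accurate only for the $(+,+)$ block, whereas in the $(-,-)$ block these entries also receive the cross terms $\kappa w_n^{\half} \cA^{-+}_{2,n}$ and $\kappa w_n^{\half} \cA^{+-}_{n,2}$, which cancel again via $\Delta - \kappa^2 w_n = 1$ — but this falls out automatically of the block expansion you set up.
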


\begin{proof}
Looking back to \eqref{B} and \eqref{Omega_conj}, it is trivial that the
$\Omega$-conjugate of matrix $B$ coincides with $B$. However, verifying
the indicated relationship for $\cA$ is a bit more subtle. Nevertheless, 
remembering \eqref{R&E}, \eqref{cA_explicit}, \eqref{g} and \eqref{bsX_H_g_OK}, 
we see immediately that the $N \times N$ matrix
\be\label{Lambda}
    \Lambda = \Omega \cA \Omega^{-1} - g \cA g^{-1} - \bsX_H[g] g^{-1}
\ee
has the block matrix structure
\be\label{Lambda_block_mat}
    \Lambda = 
    \BLOCKMAT{\Lambda^{++}}{\Lambda^{+-}}
                {\Lambda^{-+}}{R \hat{\Lambda}^{--} R},
\ee
where
\begin{align}
    \Lambda^{++} = 
        & \hat{\cA}^{--} - g^{++} \cA^{++} g^{++} 
            + \kappa w_n^\half g^{++} \cA^{+-} E_{1, n} 
        \nonumber \\
        & - \kappa w_n^\half E_{n, 1} \cA^{-+} g^{++} 
            + \kappa^2 w_n E_{n, n} \hat{\cA}^{--} E_{n, n},
    \label{Lambda++} \\
    \hat{\Lambda}^{--} = 
        & \cA^{++} - g^{++} \hat{\cA}^{--} g^{++} 
            + \kappa w_n^\half g^{++} R \cA^{-+} E_{n, n} 
        \nonumber \\
        & - \kappa w_n^\half E_{n, n} \cA^{+-} R g^{++} 
            + \kappa^2 w_n E_{n, n} \cA^{++} E_{n, n},
    \label{hatLambda--}
\end{align}
whereas the off-diagonal blocks are given by
\begin{align}
    \Lambda^{+-} =  &   -R \cA^{-+} R - g^{++} \cA^{+-} g^{--} 
                        + \kappa w_n^\half g^{++} \cA^{++} E_{n, 1} 
                    \nonumber \\
                    &   - \kappa w_n^\half E_{n, n} \hat{\cA}^{--} R g^{--} 
                        + \kappa^2 w_n E_{n, 1} \cA^{-+} E_{n, 1}
                        - \halfbeta \kappa w_n^\half \Delta^\half \Upsilon 
                            E_{n, 1},
    \label{Lambda+-} \\
    \Lambda^{-+} =  &   -R \cA^{+-} R - g^{--} \cA^{-+} g^{++} 
                        + \kappa w_n^\half g^{--} R \hat{\cA}^{--} E_{n, n} 
                    \nonumber \\
                    &   - \kappa w_n^\half E_{1, n} \cA^{++} g^{++} 
                        + \kappa^2 w_n E_{1, n} \cA^{+-} E_{1, n}
                        - \halfbeta \kappa w_n^\half \Delta^\half \Upsilon 
                            E_{1, n}.
    \label{Lambda-+}
\end{align}
Keeping in mind the matrix entries of $\cA$~\eqref{cA} and $g$~\eqref{g}, 
we proceed with examining the above four $n \times n$ blocks of $\Lambda$, 
one at a time. Besides the commutation relations \eqref{E_commut_rel}, 
during the calculations we shall use frequently the relationships 
\eqref{R&E} and \eqref{RER}, too.

Starting with $\Lambda^{++}$~\eqref{Lambda++}, from \eqref{cA++} and 
\eqref{g++&g--} it is plain that
\be\label{Lambda++_2nd_term}
\begin{split}
    g^{++} \cA^{++} g^{++} =    &   \sum_{c = 1}^{n - 1} \phi_c E_{c, c} 
                                    + \cA^{++}_{n, n} \Delta E_{n, n}
                                    + \sum_{c = 1}^{n - 2} 
                                        \BR{\cA^{++}_{c, c + 1} 
                                                E_{c, c + 1} 
                                            + \cA^{++}_{c + 1, c} 
                                                E_{c + 1, c}} \\
                                &   + \cA^{++}_{n - 1, n} \Delta^\half 
                                        E_{n - 1, n}
                                    + \cA^{++}_{n, n - 1} \Delta^\half 
                                        E_{n, n - 1}.
\end{split}
\ee
Continuing with the last three terms appearing on the right hand side of 
\eqref{Lambda++}, from \eqref{cA+-}, \eqref{cA-+} and \eqref{hatcA--} we
obtain
\be\label{Lambda++_last_three_terms}
    g^{++} \cA^{+-} E_{1, n} 
    + E_{n, 1} \cA^{-+} g^{++} 
    + E_{n, n} \hat{\cA}^{--} E_{n, n}
    = \big(
        (\cA^{+-}_{n, 1} + \cA^{-+}_{1, n}) \Delta^\half 
        + \hat{\cA}^{--}_{n, n} 
    \big) E_{n, n}.
\ee
Now, recalling \eqref{cA++} and \eqref{hatcA--}, from \eqref{Lambda++_2nd_term} 
it is plain that the off-diagonal entries in \eqref{Lambda++} cancel. The first
$n - 1$ diagonal entries also trivially cancel, and we end up with
\be\label{Lambda++_OK}
    \Lambda^{++} = \big( 
                        (\hat{\cA}^{--}_{n, n} - \cA^{++}_{n, n}) \Delta
                        + \kappa w_n^\half \Delta^\half 
                            (\cA^{+-}_{n, 1} - \cA^{-+}_{1, n}) 
                    \big) E_{n, n}
    = \bszero_n.
\ee

As concerns $\hat{\Lambda}^{--}$~\eqref{hatLambda--}, we can easily cook up 
a formula for the product $g^{++} \hat{\cA}^{--} g^{++}$ simply by replacing 
symbol $\cA^{++}$ in \eqref{Lambda++_2nd_term} at each place with 
$\hat{\cA}^{--}$. Furthermore, from \eqref{cA++}, \eqref{cA+-}, \eqref{cA-+} 
and \eqref{g++&g--} we obtain
\begin{align}
    & g^{++} R \cA^{-+} E_{n, n} = 
        \cA^{-+}_{1, n} \Delta^\half E_{n, n} + \cA^{-+}_{2, n} E_{n - 1, n}, 
    \label{hatLambda--_3rd_term} \\
    & E_{n, n} \cA^{+-} R g^{++} = 
        \cA^{+-}_{n, 1} \Delta^\half E_{n, n} + \cA^{+-}_{n, 2} E_{n, n - 1},
    \label{hatLambda--_4th_term} \\
    & E_{n, n} \cA^{++} E_{n, n} = \cA^{++}_{n, n} E_{n, n}.
    \label{hatLambda--_5th_term}
\end{align}
Now, by plugging \eqref{cA++} and the above expressions into 
\eqref{hatLambda--}, the majority of the terms trivially cancel, and we are 
left with
\be\label{hatLambda--_OK}
\begin{split}
    \hat{\Lambda}^{--} =    &   \big( 
                                    (\cA^{++}_{n, n} 
                                        - \hat{\cA}^{--}_{n, n}) \Delta
                                    + \kappa w_n^\half \Delta^\half 
                                        (\cA^{-+}_{1, n} - \cA^{+-}_{n, 1}) 
                                \big) E_{n, n} \\
                            &   + \big( \cA^{++}_{n - 1, n} 
                                        - \hat{\cA}^{--}_{n - 1, n} 
                                            \Delta^\half 
                                        + \kappa w_n^\half \cA^{-+}_{2, n} 
                                    \big) E_{n - 1, n} \\
                            &   + \big( 
                                        \cA^{++}_{n, n - 1} 
                                        - \hat{\cA}^{--}_{n, n - 1} 
                                            \Delta^\half 
                                        - \kappa w_n^\half \cA^{+-}_{n, 2} 
                                    \big) E_{n, n - 1}.
\end{split}
\ee
However, from the explicit form of the matrix entries we get
$\hat{\Lambda}^{--} = \bszero_n$.

Turning to the off-diagonal block $\Lambda^{+-}$~\eqref{Lambda+-}, notice 
the following:
\begin{align}
    & R \cA^{-+} R = 
        \cA^{-+}_{1, n} E_{n, 1} + \cA^{-+}_{2, n} E_{n - 1, 1}, 
    \label{Lambda+-_1st_term} \\
    & g^{++} \cA^{+-} g^{--} = 
        \cA^{+-}_{n, 1} \Delta E_{n, 1} 
        + \cA^{+-}_{n, 2} \Delta^\half E_{n, 2},
    \label{Lambda+-_2nd_term} \\
    & g^{++} \cA^{++} E_{n, 1} = 
        \cA^{++}_{n, n} \Delta^\half E_{n, 1}
        + \cA^{++}_{n - 1, n} E_{n - 1, n},
    \label{Lambda+-_3rd_term} \\
    & E_{n, n} \hat{\cA}^{--} R g^{--} = 
        \hat{\cA}^{--}_{n, n} \Delta^\half E_{n, 1}
        + \hat{\cA}^{--}_{n, n - 1} E_{n, 2},
    \label{Lambda+-_4th_term} \\
    & E_{n, 1} \cA^{-+} E_{n, 1} = \cA^{-+}_{1, n} E_{n, 1}.
    \label{Lambda+-_5th_term}
\end{align}
Plugging the above formula into \eqref{Lambda+-}, we obtain
\be\label{Lambda+-_OK}
\begin{split}
    \Lambda^{+-} =  &   \Big( 
                            (\kappa^2 w_n - 1) \cA^{-+}_{1, n}
                            - \cA^{+-}_{n, 1} \Delta 
                            + \kappa w_n^\half \Delta^\half 
                                \Big( 
                                    \cA^{++}_{n, n} - \hat{\cA}^{--}_{n, n}
                                    - \halfbeta \Upsilon
                                \Big) 
                        \Big) E_{n, 1} \\
                    &   + \big( 
                                \kappa w_n^\half \cA^{++}_{n - 1, n}
                                -\cA^{-+}_{2, n}  
                            \big) E_{n - 1, 1} 
                        - \big( 
                                \cA^{+-}_{n, 2} \Delta^\half 
                                + \kappa w_n^\half \hat{\cA}^{--}_{n, n - 1} 
                            \big) E_{n, 2}.
\end{split}
\ee
Remembering \eqref{Upsilon} and the matrix entries of $\cA$, we find
$\Lambda^{+-} = \bszero_n$.

Finally, for the terms appearing in the formula of 
$\Lambda^{-+}$~\eqref{Lambda-+} we can write
\begin{align}
    & R \cA^{+-} R = 
        \cA^{+-}_{n, 1} E_{1, n} + \cA^{+-}_{n, 2} E_{1, n - 1}, 
    \label{Lambda-+_1st_term} \\
    & g^{--} \cA^{-+} g^{++} = 
        \cA^{-+}_{1, n} \Delta E_{1, n} 
        + \cA^{-+}_{2, n} \Delta^\half E_{2, n},
    \label{Lambda-+_2nd_term} \\
    & g^{--} R \hat{\cA}^{--} E_{n, n} = 
        \hat{\cA}^{--}_{n, n} \Delta^\half E_{1, n}
        + \hat{\cA}^{--}_{n - 1, n} E_{2, n},
    \label{Lambda-+_3rd_term} \\
    & E_{1, n} \cA^{++} g^{++} = 
        \cA^{++}_{n, n} \Delta^\half E_{1, n}
        + \cA^{++}_{n, n - 1} E_{1, n - 1},
    \label{Lambda-+_4th_term} \\
    & E_{1, n} \cA^{+-} E_{1, n} = \cA^{+-}_{n, 1} E_{1, n}.
    \label{Lambda-+_5th_term}
\end{align}
It readily follows
\be\label{Lambda-+_OK}
\begin{split}
    \Lambda^{-+} =  &   \Big( 
                            (\kappa^2 w_n - 1) \cA^{+-}_{n, 1}
                            - \cA^{-+}_{1, n} \Delta 
                            + \kappa w_n^\half \Delta^\half 
                                \Big( 
                                    \hat{\cA}^{--}_{n, n} - \cA^{++}_{n, n} 
                                    - \halfbeta \Upsilon 
                                \Big) 
                        \Big) E_{1, n} \\
                    &   - \big( 
                                \cA^{+-}_{n, 2} 
                                + \kappa w_n^\half \cA^{++}_{n, n - 1} 
                            \big) E_{1, n - 1}  
                        + \big( 
                                \kappa w_n^\half \hat{\cA}^{--}_{n - 1, n}
                                - \cA^{-+}_{2, n} \Delta^\half 
                            \big) E_{2, n},
\end{split}
\ee
which immediately leads to $\Lambda^{-+} = \bszero_n$. 

Summarizing, we see that each block of $\Lambda$~\eqref{Lambda_block_mat} 
is zero, and so necessarily $\Lambda = \bszero_N$. Thus, remembering the 
definition \eqref{Lambda}, the proof is complete.
\end{proof}

Having completed the preparations, at this point we define a \emph{Lax matrix} 
for the deformed relativistic Toda system \eqref{H} by the matrix valued 
function
\be\label{cL}
    \cL = L \Omega L^{-1} \Omega^{-1} g.
\ee
Recall that the constituent matrix $L$~\eqref{L} lower bidiagonal. Thus, on 
account of \eqref{Omega&grading}, the product $\Omega L \Omega^{-1}$ is an 
upper bidiagonal matrix, and so its inverse is upper triangular. Consequently,
\be\label{cH}
    \cH = L \Omega L^{-1} \Omega^{-1} = L (\Omega L \Omega^{-1})^{-1}
\ee
is an upper Hessenberg matrix. Since for $\kappa > 0$ the matrix 
$g$~\eqref{g} is tridiagonal, it is plain that the proposed Lax matrix
\be\label{cL&cH}
    \cL = \cH g
\ee
has lower bandwidth $2$. So, in the most interesting cases $\cL$ is not 
Hessenberg, but its strictly lower triangular part is still quite spare.

\begin{THEOREM}\label{THEOREM:Lax_eqn}
The matrix valued functions $\cL$~\eqref{cL} and $\cA$~\eqref{cA} obey the 
Lax equation
\be\label{Lax_eqn}
    \bsX_H[\cL] = [\cA, \cL].
\ee
In other words, the pair of matrices $(\cL, \cA)$ provides a Lax pair for 
the Hamiltonian dynamics generated by $H$~\eqref{H}.
\end{THEOREM}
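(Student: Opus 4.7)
The plan is to differentiate $\cL = L \Omega L^{-1} \Omega^{-1} g$ directly along $\bsX_H$ via the Leibniz rule, and then eliminate all intermediate objects using the Lax triad of Theorem~\ref{THEOREM:Lax_triad} together with the two algebraic identities of Proposition~\ref{PROPOSITION:cA&B&g&Omega}. Since $\Omega$ is a constant matrix and $\bsX_H[L^{-1}] = - L^{-1} \bsX_H[L] L^{-1}$, a first application of Leibniz yields
\be
    \bsX_H[\cL]
    = \bsX_H[L]\, \Omega L^{-1} \Omega^{-1} g
        - L \Omega L^{-1} \bsX_H[L] L^{-1} \Omega^{-1} g
        + L \Omega L^{-1} \Omega^{-1} \bsX_H[g].
\ee

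Next I would plug in the Lax triad $\bsX_H[L] = \cA L - L B$ into both of the first two terms. The piece $\cA L \cdot \Omega L^{-1} \Omega^{-1} g$ immediately assembles into $\cA \cL$. The contribution proportional to $L B$ from the first term produces $-L B \Omega L^{-1} \Omega^{-1} g$, while the second term contributes $+L \Omega B L^{-1} \Omega^{-1} g$. Since the second identity of Proposition~\ref{PROPOSITION:cA&B&g&Omega} says $\Omega B = B \Omega$, these two contributions cancel exactly. Similarly, the second term also contributes $-L \Omega L^{-1} \cA \Omega^{-1} g$, so at this stage we arrive at
\be
    \bsX_H[\cL]
    = \cA \cL
        - L \Omega L^{-1} \BR{\cA \Omega^{-1} g - \Omega^{-1} \bsX_H[g]}.
\ee

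The key algebraic manoeuvre is to recognize the bracket on the right using the first identity of Proposition~\ref{PROPOSITION:cA&B&g&Omega}. Rewriting that identity as $\cA \Omega^{-1} = \Omega^{-1}(g \cA g^{-1} + \bsX_H[g] g^{-1})$ and right-multiplying by $g$ gives $\cA \Omega^{-1} g = \Omega^{-1} g \cA + \Omega^{-1} \bsX_H[g]$, so the bracket collapses to $\Omega^{-1} g \cA$. Substituting back yields $\bsX_H[\cL] = \cA \cL - L \Omega L^{-1} \Omega^{-1} g \cA = \cA \cL - \cL \cA = [\cA, \cL]$, as desired.

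I do not expect any genuine obstacle: the entire argument is a short manipulation with $\cA$, $B$, $L$, $g$ and $\Omega$, relying only on the Leibniz rule, the standard formula for $\bsX_H[L^{-1}]$, and the two identities already supplied. The only place where one has to be mildly careful is keeping track of signs and ordering while inserting the Lax triad, but no further entrywise calculation is required: the heavy combinatorial work has been done in the proofs of Theorem~\ref{THEOREM:Lax_triad} and Proposition~\ref{PROPOSITION:cA&B&g&Omega}.
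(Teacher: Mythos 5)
Your proposal is correct and follows essentially the same route as the paper: a single application of the Leibniz rule to $\cL = L\Omega L^{-1}\Omega^{-1}g$, substitution of the Lax triad $\bsX_H[L] = \cA L - LB$, cancellation of the $B$-terms via $\Omega B = B\Omega$, and absorption of the remaining terms using the identity $\Omega\cA\Omega^{-1} = g\cA g^{-1} + \bsX_H[g]g^{-1}$ from Proposition~\ref{PROPOSITION:cA&B&g&Omega}. The only cosmetic difference is that you rearrange that identity to $\cA\Omega^{-1}g = \Omega^{-1}g\cA + \Omega^{-1}\bsX_H[g]$ rather than inserting $\Omega^{-1}\Omega$ and conjugating, which is an equivalent bookkeeping choice.
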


\begin{proof}
Remembering \eqref{cL}, Leibniz rule allows us to write
\be
    \bsX_H[\cL] = \bsX_H[L] \Omega L^{-1} \Omega^{-1} g 
                    - L \Omega L^{-1} \bsX_H[L] L^{-1} \Omega^{-1} g
                    + L \Omega L^{-1} \Omega^{-1} \bsX_H[g].
\ee
Therefore, by exploiting Theorem~\ref{THEOREM:Lax_triad}, we obtain
\be
\begin{split}
    \bsX_H[\cL] =
        & \cA \cL - L \Omega L^{-1} \Omega^{-1} (\Omega \cA \Omega^{-1}) g 
            + \cL g^{-1} \bsX_H[g] \\
        & - L (B \Omega) L^{-1} \Omega^{-1} g 
            + L (\Omega B) L^{-1} \Omega^{-1} g,
\end{split}
\ee
and so by invoking Proposition~\ref{PROPOSITION:cA&B&g&Omega}, the Theorem 
follows.
\end{proof}

\section{Solution algorithm}\label{SECTION:Solution_algo}
Constructing solution algorithms from the Lax representation of the dynamics 
is an important aspect of the theory of integrable systems (for review see 
e.g. \cite{Pe}). Since this is the usual precursor for a more geometric 
treatment in the framework of symplectic reductions, we find it highly 
motivated to present a `projection method' for the van Diejen--Toda systems 
\eqref{H}, too. We achieve this goal by adapting to our deformed systems 
the algebraic machinery available for Ruijsenaars' relativistic Toda 
chains (see Section 4 in \cite{R90}). To make this approach work, in the 
first half of this section we establish some useful algebraic properties 
of the Lax pair $(\cL, \cA)$ with members given in \eqref{cL} and 
\eqref{cA_explicit}. During the calculations we shall utilize the 
$\bZ$-gradation \eqref{X_decomp} and the notations \eqref{X_triang} and 
\eqref{X_strictly_triang}, too, but this time only for $N \times N$ matrices.

\subsection{Algebraic relations}\label{SUBSECTION:alg_rels}
Recalling \eqref{cL&cH}, we start our investigation with the upper Hessenberg 
matrix $\cH$~\eqref{cH}, which is essentially built upon the lower bidiagonal 
matrix $L$~\eqref{L}. Now, giving a glance at $D$ and $W$ (see \eqref{D&W}), 
from \eqref{Omega_conj} it is plain that
\be\label{Omega&D&W}
    \Omega D \Omega^{-1} = D^{-1}
    \midand
    \Omega W \Omega^{-1} = -W^t,
\ee
whence
\be\label{L&Omega}
    \Omega L \Omega^{-1} = D^{-1} (\bsone_N - W^t).
\ee
Consequently, the definition of $\cH$~\eqref{cH} immediately leads 
to the formula
\be\label{cH&D&W}
    \cH = D (\bsone_N + W) (\bsone_N - W^t)^{-1} D.
\ee
However, since $W$~\eqref{D&W} is manifestly \emph{nilpotent}, obeying 
$W^N = \bszero_N$, the inverse appearing in the above equation can be 
calculated as
\be\label{inverse_rel}
    (\bsone_N - W^t)^{-1} = \bsone_N + W^t + (W^t)^2 + \cdots + (W^t)^{N - 1}.
\ee
Consequently, for $\cH$ we obtain the graded decomposition
\be\label{cH_decomp}
    \cH = \cH_{-1} + \cH_0 + \cH_1 + \cdots + \cH_{N - 1},
\ee
where
\be\label{cH_-1&N-1}
    \cH_{-1} = D W D
    \midand
    \cH_{N - 1} = D (W^t)^{N - 1} D,
\ee
whereas for $0 \leq k \leq N - 2$ we have
\be\label{cH_part_k}
    \cH_k 
    = D ( \bsone_N + W W^t ) (W^t)^k D.
\ee
Of course, we still need control over the diagonal matrix
\be\label{cT}
    \cT = \bsone_N + W W^t.
\ee
Straightforward calculations show that it has the block matrix structure
\be\label{cT_block_mat}
    \cT = \BLOCKMAT{\cT^{++}}{\bszero_n}{\bszero_n}{R \hat{\cT}^{--} R},
\ee
where
\be\label{cT++&hatcT--}
    \cT^{++} = E_{1, 1} + \sum_{c = 2}^n (1 + w_{c - 1}) E_{c, c}
    \midand
    \hat{\cT}^{--} = \sum_{c = 1}^n (1 + w_c) E_{c, c}.
\ee
At this point, in principle, we could easily work out explicit formulas for 
the matrix entries of $\cH$, too. However, the good news is that later on we 
shall need control over only the homogeneous parts $\cH_{-1}$, $\cH_0$ and 
$\cH_1$.

Starting with the degree $-1$ part of $\cH$~\eqref{cH_decomp}, from 
\eqref{D&W} and \eqref{cH_-1&N-1} we see that it can be partitioned into
the block matrix form
\be\label{cH_part_-1_block_mat}
    \cH_{-1} = 
    \BLOCKMAT{\BR{\cH_{-1}}^{++}}{\bszero_n}
                {\BR{\cH_{-1}}^{-+}}{\BR{\cH_{-1}}^{--}},
\ee
where the non-trivial $n \times n$ blocks are given by
\begin{align}
    & \BR{\cH_{-1}}^{++} =
        \sum_{c = 1}^{n - 2} x_c^\half x_{c + 1}^\half w_c^\half E_{c + 1, c}
        + x_{n - 1}^\half x_n^\half w_{n - 1}^\half \Delta^{-\quarter} 
            E_{n, n - 1},
    \label{cH_part_-1_++} \\
    & \BR{\cH_{-1}}^{--} =
        - R \Big( 
                \sum_{c = 1}^{n - 2} x_c^{-\half} x_{c + 1}^{-\half} 
                    w_c^\half E_{c, c + 1}
                + x_{n - 1}^{-\half} x_n^{-\half} w_{n - 1}^\half 
                    \Delta^\quarter E_{n - 1, n}
        \Big) R,
    \label{cH_part_-1_--} \\
    & \BR{\cH_{-1}}^{-+} = w_n^\half E_{1, n}.
    \label{cH_part_-1_-+} 
\end{align}
Incidentally, from the above formulas it is apparent that $\cH$ is 
actually an \emph{unreduced} upper Hessenberg matrix, meaning that all 
its entries in the first subdiagonal are nonzero. 

As concerns the diagonal part of $\cH$, from \eqref{cH_part_k} and \eqref{cT} 
we obtain
\be
    \cH_0 = D \cT D = \cT D^2.
\ee 
Since we are dealing here with diagonal matrices, we find effortlessly that
\be\label{cH_part_0_block_mat}
    \cH_0 = \BLOCKMAT{(\cH_0)^{++}}{\bszero_n}{\bszero_n}{(\cH_0)^{--}}
\ee
with blocks
\begin{align}
    & (\cH_0)^{++} = 
        x_1 E_{1, 1} 
        + \sum_{c = 2}^{n - 1} x_c (1 + w_{c - 1}) E_{c, c} 
        + x_n (1 + w_{n - 1}) \Delta^{-\half} E_{n, n},
    \label{cH_part_0_++} \\
    & (\cH_0)^{--} =
        R \Big(
            \sum_{c = 1}^{n - 1} x_c^{-1} (1 + w_c) E_{c, c}
            + x_n^{-1} (1 + w_n) \Delta^\half E_{n, n}
        \Big) R.
    \label{cH_part_0_--}
\end{align}

Finally, turning now to the degree $1$ part of $\cH$, due to 
\eqref{cH_part_k} we can write
\be
    \cH_1 = D \cT W^t D = \cT D W^t D = \cT (D W D)^t = \cT \BR{ \cH_{-1} }^t.
\ee
However, since we already have explicit formulas for the matrices $\cT$ and 
$\cH_{-1}$, it comes without any difficulty that
\be\label{cH_part_1_block_mat}
    \cH_1 = 
    \BLOCKMAT{\BR{\cH_1}^{++}}{\BR{\cH_1}^{+-}}
                {\bszero_n}{\BR{\cH_1}^{--}},
\ee
where the non-trivial $n \times n$ blocks are furnished by the slightly 
complicated expressions
\begin{align}
    \BR{\cH_1}^{++} =   
    & x_1^\half x_2^\half w_1^\half E_{1, 2}
        + \sum_{c = 2}^{n - 2} 
            x_c^\half x_{c + 1}^\half w_c^\half (1 + w_{c - 1}) E_{c, c + 1}
    \nonumber \\
    & + x_{n - 1}^\half x_n^\half 
        w_{n - 1}^\half (1 + w_{n - 2}) \Delta^{-\quarter} E_{n - 1, n},
    \label{cH_part_1_++} \\
    \BR{\cH_1}^{+-} =   
    & w_n^\half (1 + w_{n - 1}) E_{n, 1},
    \label{cH_part_1_+-}
\end{align}
whilst the lower-right-hand corner can be recovered from
\be\label{cH_part_1_--}
\begin{split}
    R \BR{\cH_1}^{--} R =   
    & - \sum_{c = 1}^{n - 2} 
                x_c^{-\half} x_{c + 1}^{-\half} w_c^\half (1 + w_{c + 1}) 
                    E_{c + 1, c} \\
    & - x_{n - 1}^{-\half} x_n^{-\half} w_{n - 1}^\half (1 + w_n) 
        \Delta^\quarter E_{n, n - 1}.
\end{split}
\ee

Related to upper Hessenberg matrix $\cH$~\eqref{cH}, we find it convenient 
to introduce the conjugated matrix
\be\label{cHtilde}
    \tilde{\cH} = \Omega \cH \Omega^{-1}.
\ee
By exploiting \eqref{Omega_conj}, it is evident that $\tilde{\cH}$ is an 
unreduced lower Hessenberg matrix, and a moment of reflection also reveals 
that
\be\label{cH_inv}
    \tilde{\cH} = \Omega L \Omega^{-1} L^{-1} = \cH^{-1}.
\ee
Thus, keeping in mind \eqref{Omega&grading}, from \eqref{cH_decomp} it 
follows that $\tilde{\cH}$ has the graded decomposition
\be\label{tildecH_decomp}
    \tilde{\cH} = 
    \tilde{\cH}_{-(N - 1)} + \cdots + \tilde{\cH}_{-1} 
    + \tilde{\cH}_0 + \tilde{\cH}_1,
\ee
where
\be
    \tilde{\cH}_k = \Omega \cH_{-k} \Omega^{-1}
    \qquad
    (-(N - 1) \leq k \leq 1).
\ee
In particular, making use of \eqref{Omega_conj}, we can easily cook up 
explicit formulas for the matrices $\tilde{\cH}_{-1}$, $\tilde{\cH}_0$ and 
$\tilde{\cH}_1$ from the above given expressions for $\cH_1$, $\cH_0$ and 
$\cH_{-1}$, respectively.

As the last piece of preparation of this subsection, with the aid of the 
standard coordinate functions \eqref{q&theta} we introduce the matrix valued 
smooth function
\be\label{Q}
    Q = \diag(q_1, \ldots, q_n, -q_n, \ldots, -q_1)
        \in C^\infty(P, \bR^{N \times N}).
\ee

\begin{LEMMA}\label{LEMMA:cL&cA&Q&H}
The matrices $\cL$~\eqref{cL} and $\cA$~\eqref{cA_explicit} obey the algebraic
relations
\be\label{cL&cA_rels}
    \Big( \cA + \halfbeta \cL \Big)_{<0} = \bszero_N
    \midand
    \Big( \cA + \halfbeta \cL^{-1} \Big)_{>0} = \bszero_N.
\ee
Furthermore, the Hamiltonian $H$~\eqref{H} and the matrix $Q$~\eqref{Q} are
related to the diagonal parts of $\cL$ and $\cL^{-1}$ by 
\be\label{H&Q&cL}
    H = \half \tr(\cL)
    \midand
    \bsX_H[Q] = \halfbeta (\cL - \cL^{-1})_0.
\ee
\end{LEMMA}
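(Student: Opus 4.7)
The plan is to verify the four assertions by direct computation, leveraging the block-matrix machinery and graded decompositions developed in subsection~\ref{SUBSECTION:alg_rels}. All calculations are fueled by the explicit formulas for the graded parts of $\cH$ in \eqref{cH_part_-1_block_mat}--\eqref{cH_part_1_--}, for $g$ and $g^{-1}$ in \eqref{g}--\eqref{g_inv}, and for $\cA$ in \eqref{cA_explicit}--\eqref{cA-+}. The key observation is that $g$ has the graded decomposition $g = g_{-1} + g_0 + g_1$, where $g_{\pm 1}$ is supported at the single full-matrix position $(n+1,n)$ or $(n,n+1)$ with common value $\kappa w_n^\half$, while $\cH$ is upper Hessenberg and hence has $\cH_k$ vanishing for $k < -1$.

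For the first identity in \eqref{cL&cA_rels}, combining these grading constraints in $\cL = \cH g$ gives
\be
    \cL_{-2} = \cH_{-1} g_{-1},
    \qquad
    \cL_{-1} = \cH_{-1} g_0 + \cH_0 g_{-1},
\ee
and every lower-degree piece vanishes. Substituting the explicit blocks of $\cH_{-1}$ from \eqref{cH_part_-1_++}--\eqref{cH_part_-1_-+}, of $\cH_0$ from \eqref{cH_part_0_++}--\eqref{cH_part_0_--}, and reading off $\cA_{-1} + \cA_{-2}$ from \eqref{cA++}--\eqref{cA-+} yields, after an entry-by-entry comparison, $\cA_{-k} + \halfbeta \cL_{-k} = \bszero_N$ for $k = 1, 2$. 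The second identity is handled analogously by writing $\cL^{-1} = g^{-1} \tilde{\cH}$: here $\tilde{\cH}$ is lower Hessenberg by \eqref{cH_inv}, its graded pieces follow from those of $\cH$ via the flip rule \eqref{Omega_conj}, and one compares $(\cL^{-1})_1 = (g^{-1})_0 \tilde{\cH}_1 + (g^{-1})_1 \tilde{\cH}_0$ and $(\cL^{-1})_2 = (g^{-1})_1 \tilde{\cH}_1$ with $\cA_1 + \cA_2$.

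For the trace formula, the block structure of $g$ immediately yields
\be
    \tr(\cL) = \tr(\cH g) =
    \sum_{i = 1}^{N} \cH_{ii} \, g_{ii}
    + \cH_{n+1, n} \, g_{n, n+1}
    + \cH_{n, n+1} \, g_{n+1, n},
\ee
since only the diagonal of $g$ and its two extra entries at $(n,n+1)$ and $(n+1,n)$ can contribute. The diagonal sum is read off from $\cH_0$ in \eqref{cH_part_0_++}--\eqref{cH_part_0_--} combined with the diagonal of $g$; the two cross terms use the entries $(\cH_{-1})^{-+} = w_n^\half E_{1,n}$ from \eqref{cH_part_-1_-+} and $(\cH_1)^{+-} = w_n^\half (1 + w_{n-1}) E_{n,1}$ from \eqref{cH_part_1_+-}. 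A short bookkeeping exercise shows that the resulting expression coincides with twice the right-hand side of \eqref{H_in_x&w}, whence $H = \half \tr(\cL)$.

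For the position-derivative identity, the same computation also produces $(\cL)_0$ as a diagonal matrix whose entries are individually identified above; the diagonal $(\cL^{-1})_0$ is computed identically from $\cL^{-1} = g^{-1} \tilde{\cH}$ using the graded pieces of $\tilde{\cH}$. Subtracting and comparing the resulting diagonal entries with the formulas for $\bsX_H[q_a]$ in \eqref{bsX_H_q_a}—bearing in mind that the lower half of $Q$~\eqref{Q} carries the opposite sign—closes the proof. The main obstacle lies in the patient bookkeeping required by \eqref{cL&cA_rels}, especially the cancellations that reproduce the non-Hessenberg blocks \eqref{cA+-} and \eqref{cA-+}; no new idea is needed beyond the decompositions already assembled in subsection~\ref{SUBSECTION:alg_rels}.
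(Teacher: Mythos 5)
Your proposal is correct and follows essentially the same route as the paper: decompose $\cL = \cH g$ and $\cL^{-1} = g^{-1}\tilde{\cH}$ by degree using the sparsity of $g^{\pm 1}$ and the Hessenberg structure of $\cH$, compare the resulting homogeneous pieces entry-by-entry with \eqref{cA++}--\eqref{cA-+}, and match the diagonal parts against \eqref{H_in_x&w} and Lemma~\ref{LEMMA:bsX_H_q&p}. Your direct expansion of $\tr(\cH g)$ via the three nonzero entry-positions of $g$ is just a repackaging of the paper's computation of $\cL_0 = \cH_{-1}g_1 + \cH_0 g_0 + \cH_1 g_{-1}$, so there is no substantive difference.
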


\begin{proof}
We start by noticing that $g$~\eqref{g} has the graded decomposition
\be\label{g_decomp}
    g = g_{-1} + g_0 + g_1
\ee
with homogeneous parts
\be\label{g_parts}
    g_{-1} = 
    \kappa w_n^\half \BLOCKMAT{\bszero_n}{\bszero_n}{E_{1, n}}{\bszero_n},
    \quad
    g_0 = 
    \BLOCKMAT{g^{++}}{\bszero_n}{\bszero_n}{g^{--}},
    \quad
    g_1 = 
    \kappa w_n^\half \BLOCKMAT{\bszero_n}{E_{n, 1}}{\bszero_n}{\bszero_n}.
\ee
Therefore, remembering \eqref{cL&cH} and \eqref{cH_decomp}, for the lower
triangular part of the Lax matrix we can write
\be
    \cL_{\leq 0} = \cL_{-2} + \cL_{-1} + \cL_0,
\ee
where
\be\label{cL_parts}
    \cL_{-2} = \cH_{-1} g_{-1},
    \quad
    \cL_{-1} = \cH_{-1} g_0 + \cH_0 g_{-1},
    \quad
    \cL_0 = \cH_{-1} g_1 + \cH_0 g_0 + \cH_1 g_{-1}.
\ee
Below we shall analyze each homogeneous part separately. 

Starting with the degree $-2$ part of the Lax matrix, from 
\eqref{cH_part_-1_block_mat}, \eqref{cH_part_-1_--} and \eqref{g_parts} we 
find immediately that
\be
    \cL_{-2} = 
    \kappa w_n^\half \BLOCKMAT{\bszero_n}{\bszero_n}
                                {(\cH_{-1})^{--} E_{1, n}}{\bszero_n}
    = - \kappa x_{n - 1}^{-\half} x_n^{-\half} 
        w_{n - 1}^\half w_n^\half \Delta^\quarter 
        \BLOCKMAT{\bszero_n}{\bszero_n}{E_{2, n}}{\bszero_n}.
\ee
Simply by comparing the above matrix with $\cA$~\eqref{cA_explicit}, 
focusing in particular on the form of the off-diagonal block 
$\cA^{-+}$~\eqref{cA-+}, we arrive at the conclusion
\be\label{cL&cA_part_-2}
    \cA_{-2} + \halfbeta \cL_{-2} = \bszero_N.
\ee

Proceeding with the degree $-1$ part of $\cL$, from \eqref{cL_parts} we 
infer that
\be
    \cL_{-1} = \BLOCKMAT{(\cL_{-1})^{++}}{\bszero_n}
                        {(\cL_{-1})^{-+}}{(\cL_{-1})^{--}},
\ee
where the diagonal blocks are given by
\be
    (\cL_{-1})^{++} = (\cH_{-1})^{++} g^{++}
    \midand
    (\cL_{-1})^{--} = (\cH_{-1})^{--} g^{--},
\ee
whilst the only non-trivial off-diagonal block takes the form
\be
    (\cL_{-1})^{-+} = 
        (\cH_{-1})^{-+} g^{++} + \kappa w_n^\half (\cH_0)^{--} E_{1, n}.
\ee
By performing the matrix operations, from \eqref{g++&g--}, 
\eqref{cH_part_-1_block_mat} and \eqref{cH_part_0_block_mat} we obtain
\begin{align}
    & (\cL_{-1})^{++} =
        \sum_{c = 1}^{n - 2} 
            x_c^\half x_{c + 1}^\half w_c^\half E_{c + 1, c}
        + x_{n - 1}^\half x_n^\half w_{n - 1}^\half \Delta^{-\quarter} 
            E_{n, n - 1},
    \\
    & (\cL_{-1})^{-+} =
        w_n^\half \Delta^\half \BR{ 1 + \kappa x_n^{-1} (1 + w_n) } E_{1, n},
    \\
    & (\cL_{-1})^{--} =
        -R \Big(
            \sum_{c = 1}^{n - 2} 
                x_c^{-\half} x_{c + 1}^{-\half} w_c^\half E_{c, c + 1}
            + x_{n - 1}^{-\half} x_n^{-\half} w_{n - 1}^\half 
                \Delta^{\frac{3}{4}} E_{n - 1, n}
        \Big) R.
\end{align}
Upon comparison with the corresponding blocks of 
$\cA$~\eqref{cA_explicit}, from the explicit formulas appearing in 
\eqref{cA++}, \eqref{hatcA--} and \eqref{cA-+} it is clear that
\be\label{cL&cA_part_-1}
    \cA_{-1} + \halfbeta \cL_{-1} = \bszero_N.
\ee
Now, merging this observation with \eqref{cL&cA_part_-2}, notice that the 
first relationship displayed in \eqref{cL&cA_rels} arises at once.

Turning to the diagonal part of $\cL$, from \eqref{cL_parts} we see
that
\be\label{cL_0_block_mat}
    \cL_0 = \BLOCKMAT{(\cL_0)^{++}}{\bszero_n}{\bszero_n}{(\cL_0)^{--}},
\ee
where the diagonal blocks are given by
\begin{align}
    & (\cL_0)^{++} = (\cH_0)^{++} g^{++} 
                        + \kappa w_n^\half (\cH_1)^{+-} E_{1, n},
    \label{cL_part_0_++} \\
    & (\cL_0)^{--} = \kappa w_n^\half (\cH_{-1})^{-+} E_{n, 1} 
                        + (\cH_0)^{--} g^{--}.
    \label{cL_part_0_--}
\end{align}
Now, by exploiting the equations \eqref{g++&g--}, \eqref{cH_part_-1_block_mat},
\eqref{cH_part_0_block_mat} and \eqref{cH_part_1_block_mat}, notice that the 
above diagonal matrices can be cast into the form
\begin{align}
    & (\cL_0)^{++} = 
        x_1 E_{1, 1} 
        + \sum_{c = 2}^{n - 1} x_c (1 + w_{c - 1}) E_{c, c} 
        + (x_n + \kappa w_n) (1 + w_{n - 1}) E_{n, n},
    \label{cL_part_0_++_OK} \\
    & (\cL_0)^{--} =
        R \Big(
            \sum_{c = 1}^{n - 1} x_c^{-1} (1 + w_c) E_{c, c}
            + (x_n^{-1} (1 + w_n) \Delta + \kappa w_n) E_{n, n}
        \Big) R.
    \label{cL_part_0_--_OK}
\end{align}
Therefore, taking into account Lemma~\ref{LEMMA:H}, it is immediate that
\be
    \tr(\cL) = 
    \tr(\cL_0) = 
    \tr \BR{ (\cL_0)^{++} } + \tr \BR{ (\cL_0)^{--} } = 
    2 H,
\ee
establishing the first relationship in \eqref{H&Q&cL} between the 
Hamiltonian function $H$~\eqref{H} and the Lax matrix $\cL$~\eqref{cL}.

Having completed the study of the lower triangular part of $\cL$, it is 
clear that the upper triangular part of $\cL^{-1}$ can be analyzed by the 
same technique. Highlighting only the major steps of the calculations, 
let us observe that on account of \eqref{cL&cH} and \eqref{cH_inv} we can 
write
\be
    \cL^{-1} = g^{-1} \cH^{-1} = g^{-1} \tilde{\cH}.
\ee
Remembering \eqref{g_inv} and \eqref{g_parts} we see that the inverse of $g$ 
has the graded decomposition
\be\label{g_inv_decomp}
    g^{-1} = -g_{-1} + g_0 - g_1.
\ee
Therefore, combining this observation with \eqref{cH_decomp}, it is plain 
that 
\be
    \BR{\cL^{-1}}_{\geq 0} = (\cL^{-1})_0 + (\cL^{-1})_1 + (\cL^{-1})_2,
\ee
where the diagonal part is given by
\be\label{cL_inv_part_0}
    (\cL^{-1})_0 = 
        -g_{-1} \tilde{\cH}_1 + g_0 \tilde{\cH}_0 - g_1 \tilde{\cH}_{-1},
\ee
whereas the strictly upper triangular part is built upon the homogeneous 
parts
\be\label{cL_inv_part_1&2}
    (\cL^{-1})_1 = g_0 \tilde{\cH}_1 - g_1 \tilde{\cH}_0
    \midand
    (\cL^{-1})_2 = - g_1 \tilde{\cH}_1.
\ee
Inspecting each homogeneous part separately, notice that
\be
\begin{split}
    (\cL^{-1})_2    &   = -\kappa w_n^\half 
                            \BLOCKMAT{\bszero_n}{E_{n, 1}}
                                        {\bszero_n}{\bszero_n}
                            \BLOCKMAT{R (\cH_{-1})^{--} R}
                                        {-R (\cH_{-1})^{-+} R}
                                        {\bszero_n}
                                        {R (\cH_{-1})^{++} R} \\
                    &   = -\kappa w_n^\half 
                            \BLOCKMAT{\bszero_n}
                                        {E_{n, 1} R (\cH_{-1})^{++} R }
                                        {\bszero_n}
                                        {\bszero_n}
                        = -\kappa x_{n - 1}^\half x_n^\half w_{n - 1}^\half 
                            w_n^\half \Delta^{-\quarter} 
                            \BLOCKMAT{\bszero_n}{E_{n, 2}}
                                        {\bszero_n}{\bszero_n},
\end{split}
\ee
meanwhile from \eqref{cL_inv_part_1&2} it is also clear that for the degree 
$1$ part we can write
\be
    (\cL^{-1})_1 = \BLOCKMAT{g^{++} R (\cH_{-1})^{--} R}
                            {-g^{++} R (\cH_{-1})^{-+} R}
                            {\bszero_n}
                            {g^{--} R (\cH_{-1})^{++} R}
                    - \kappa w_n^\half 
                        \BLOCKMAT{\bszero_n}{E_{n, 1} R (\cH_0)^{++} R}
                                    {\bszero_n}{\bszero_n}.
\ee
Now, recalling the matrices given in \eqref{cH_part_-1_++}, 
\eqref{cH_part_-1_--} and \eqref{cH_part_0_++}, the second relation 
displayed in \eqref{cL&cA_rels} can be confirmed easily.

Spelling out \eqref{cL_inv_part_0}, let us also observe that the diagonal 
part of $\cL^{-1}$ takes the form
\be\label{cL_inv_part_0_block_mat}
    (\cL^{-1})_0 =
        \BLOCKMAT{((\cL^{-1})_0)^{++}}{\bszero_n}
                    {\bszero_n}{((\cL^{-1})_0)^{--}},
\ee
where the non-trivial blocks are given by 
\begin{align}
    ((\cL^{-1})_0)^{++} =   &   \sum_{c = 1}^{n - 1} 
                                    x_c^{-1} (1 + w_c) E_{c, c}
                                + \BR{x_n^{-1} (1 + w_n) \Delta 
                                        + \kappa w_n (1 + w_{n - 1})} 
                                    E_{n, n},
    \label{cL_inv_part_0_++} \\
    ((\cL^{-1})_0)^{--} =   &   R \Big( 
                                        x_1 E_{1, 1} 
                                        + \sum_{c = 2}^{n - 1} 
                                            x_c (1 + w_{c - 1}) E_{c, c} 
                                        + \BR{x_n (1 + w_{n - 1}) 
                                                + \kappa w_n} E_{n, n} 
                                    \Big) R.
    \label{cL_inv_part_0_--}
\end{align}
The point is that, by subtracting the above diagonal matrices from the
corresponding diagonal matrices given in \eqref{cL_part_0_++_OK} and 
\eqref{cL_part_0_--_OK}, Lemma~\ref{LEMMA:bsX_H_q&p} immediately leads
to the second relationship in \eqref{H&Q&cL} for $\bsX_H[Q]$.
\end{proof}

\subsection{Projection method}\label{SUBSECTION:Projection_method}
In order to make the presentation simpler, it proves convenient to introduce
the shorthand notation
\be\label{cY}
    \cY = \halfbeta ( \cL - \cL^{-1} ).
\ee
Remembering Lemma~\ref{LEMMA:cL&cA&Q&H}, it is evident that for the diagonal 
part of $\cY$ we have
\be\label{cY&Q}
    \cY_0 = \bsX_H[Q].
\ee
Moreover, due to Theorem~\ref{THEOREM:Lax_eqn} we can write
\be
    \bsX_H[\cL^{-1}] 
    = - \cL^{-1} \bsX_H[\cL] \cL^{-1} 
    = - \cL^{-1} [\cA, \cL] \cL^{-1} 
    = [\cA, \cL^{-1}],
\ee
thus the derivative of $\cY$ along the Hamiltonian vector field $\bsX_H$
takes the Lax form
\be\label{cY_Lax_eqn}
    \bsX_H[\cY] = [\cA, \cY]. 
\ee

\begin{THEOREM}\label{THEOREM:Solution_algo}
Take an arbitrary point $\zeta \in P$ and consider the maximal integral curve
\be\label{gamma_zeta_in_THM}
    \gamma_\zeta \colon \bR \rightarrow P,
    \quad
    t \mapsto \gamma_\zeta(t)
\ee
of the Hamiltonian vector field $\bsX_H$~\eqref{bsX_H} satisfying the initial
condition
\be
    \gamma_\zeta(0) = \zeta.
\ee
Then one can find smooth matrix valued functions
\be
    l_\zeta \colon \bR \rightarrow \bR^{N \times N}
    \midand
    u_\zeta \colon \bR \rightarrow \bR^{N \times N},
\ee
subject to the conditions
\be
    l_\zeta(0) = u_\zeta(0) = \bsone_N,
\ee
such that for all $t \in \bR$ the matrix $l_\zeta(t)$ is unit lower 
triangular, the matrix $u_\zeta(t)$ is unit upper triangular, and most 
importantly
\be\label{exp_flow}
    e^{t \cY(\zeta)} 
    = l_\zeta(t) e^{Q(\gamma_\zeta(t)) - Q(\zeta)} u_\zeta(t).
\ee
\end{THEOREM}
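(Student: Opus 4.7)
The approach is an adaptation of Symes' projection method (cf.\ \cite{S91}). The pivotal observation is that, because $\cL$ and $\cL^{-1}$ commute, $\COMM{\halfbeta \cL}{\cY} = \COMM{\halfbeta \cL^{-1}}{\cY} = 0$, and therefore the Lax equation of Theorem~\ref{THEOREM:Lax_eqn} admits \emph{two} triangular reformulations,
\be
    \bsX_H[\cY] = \COMM{\cA}{\cY} = \COMM{M_+}{\cY} = \COMM{M_-}{\cY},
\ee
where $M_+ := \cA + \halfbeta \cL$ and $M_- := \cA + \halfbeta \cL^{-1}$. By the algebraic relations \eqref{cL&cA_rels} of Lemma~\ref{LEMMA:cL&cA&Q&H}, $M_+$ is upper triangular and $M_-$ is lower triangular, which is the decisive feature that makes the factorization work.

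\textbf{Construction of the evolution.} Along the complete integral curve $\gamma_\zeta$ (Theorem~\ref{THEOREM:completeness}), I introduce $G_\pm \colon \bR \to \bR^{N \times N}$ as the unique global solutions of the linear matrix ODEs
\be
    \dot G_\pm(t) = M_\pm(\gamma_\zeta(t)) \, G_\pm(t),
    \qquad
    G_\pm(0) = \bsone_N.
\ee
Since the invertible upper (respectively, lower) triangular matrices form a Lie subgroup whose Lie algebra contains $M_+(\gamma_\zeta(t))$ (respectively, $M_-(\gamma_\zeta(t))$), the matrix $G_+(t)$ stays upper triangular and $G_-(t)$ stays lower triangular for every $t$. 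Uniqueness of ODE solutions combined with the reformulations above gives $\cY(\gamma_\zeta(t)) = G_\pm(t) \, \cY(\zeta) \, G_\pm(t)^{-1}$. Using this and $M_+ - M_- = \cY$, a direct computation yields
\be
    \frac{d}{dt} \big( G_-^{-1} G_+ \big) = G_-^{-1}(M_+ - M_-) G_+ = G_-^{-1} \, \cY(\gamma_\zeta(t)) \, G_+ = \cY(\zeta) \cdot G_-^{-1} G_+,
\ee
and together with $(G_-^{-1} G_+)(0) = \bsone_N$ this forces $G_-(t)^{-1} G_+(t) = e^{t \cY(\zeta)}$ for all $t \in \bR$.

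\textbf{$LDU$ extraction and identification of $D$.} Let $d_\pm(t) := (G_\pm(t))_0$ denote the diagonal parts of $G_\pm$; they are invertible throughout $\bR$ since they solve the diagonal linear ODEs $\dot d_\pm = (M_\pm)_0 \, d_\pm$ with $d_\pm(0) = \bsone_N$. Introducing $\tilde l := G_- d_-^{-1}$ (unit lower triangular) and $\tilde u := d_+^{-1} G_+$ (unit upper triangular), and rearranging,
\be
    e^{t \cY(\zeta)} = d_-^{-1} \tilde l^{-1} d_+ \tilde u = \big( d_-^{-1} \tilde l^{-1} d_- \big) \cdot \big( d_-^{-1} d_+ \big) \cdot \tilde u,
\ee
which is an honest $LDU$ factorization: set $l_\zeta(t) := d_-^{-1} \tilde l^{-1} d_-$ (unit lower, as a diagonal conjugate of $\tilde l^{-1}$), $D(t) := d_-^{-1} d_+$ (diagonal), and $u_\zeta(t) := \tilde u$, all equal to $\bsone_N$ at $t = 0$. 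Solving the diagonal ODEs explicitly gives
\be
    D(t)_{ii} = \exp \int_0^t \big( (M_+)_0 - (M_-)_0 \big)_{ii}(\gamma_\zeta(s)) \, \dd s,
\ee
and since $M_+ - M_- = \cY$ and the second identity in \eqref{H&Q&cL} identifies $\cY_0 = \bsX_H[Q]$, the integrand is $\dot Q_{ii}(\gamma_\zeta(s))$; integration yields $D(t) = e^{Q(\gamma_\zeta(t)) - Q(\zeta)}$. The only substantive input is the two-fold Lax structure coming from \eqref{cL&cA_rels}; everything else is elementary bookkeeping of triangular and diagonal matrix factorizations.
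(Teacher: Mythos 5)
Your proof is correct, and it rests on exactly the same inputs as the paper's: completeness of the flow, the Lax equation for $\cY$, and the two triangularity relations \eqref{cL&cA_rels} together with $\cY_0 = \bsX_H[Q]$ from Lemma~\ref{LEMMA:cL&cA&Q&H}. The organization, however, is genuinely different and arguably cleaner. You run the classical factorization argument: solve the two dressing flows $\dot G_\pm = M_\pm G_\pm$ with the triangular halves $M_\pm = \cA + \halfbeta\cL^{\pm 1}$, use $[M_\pm,\cY]=[\cA,\cY]$ to get the isospectral conjugations, deduce $G_-^{-1}G_+ = e^{t\cY(\zeta)}$ from $M_+ - M_- = \cY$, and then extract the $LDU$ factors algebraically, identifying the diagonal factor by integrating $(M_+-M_-)_0 = \bsX_H[Q]$. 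The paper instead writes down three bespoke ODEs directly for the factors $\cD_\zeta$, $l_\zeta$, $u_\zeta$ (the lower and upper ones conjugated by auxiliary gauges), shows $\rho_\zeta = l_\zeta\cD_\zeta$ conjugates $\cY$ correctly, and verifies that the product $\Psi_\zeta = l_\zeta e^{Q(\gamma_\zeta(t))-Q(\zeta)}u_\zeta$ satisfies $\dot\Psi_\zeta = \cY(\zeta)\Psi_\zeta$. The two constructions coincide: your $G_-^{-1}$ equals the paper's $\rho_\zeta$, and your $d_-^{-1}$ equals the paper's $\cD_\zeta$. What your route buys is transparency about where the factorization comes from (it is manifestly Symes' method with $M_\pm$ as the two triangular projections of $\cA$ modulo the centralizer of $\cY$); what the paper's route buys is that the three factors and their defining ODEs are exhibited explicitly from the outset, which is convenient for the subsequent discussion of the leading principal minors. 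One small presentational point: when you assert that $d_\pm = (G_\pm)_0$ solves $\dot d_\pm = (M_\pm)_0 d_\pm$, you are implicitly using that the diagonal part of a product of two like-triangular matrices is the product of the diagonal parts; this is true and worth a half-sentence.
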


\begin{proof}
Besides the algebraic developments we have made so far, the proof of this 
theorem hinges on two elementary facts from the theory of homogeneous linear 
systems of first order differential equations. First, the domain of every 
maximally defined solution of such system coincides with the common domain 
of the (continuous) coefficients. Second, if the coefficient matrix has a 
special (diagonal or triangular) form, then we can infer information about 
the structure of the fundamental matrix solution via the Peano--Baker series, 
which is basically the path-ordered matrix exponential. For a nice account 
on these facts see e.g. \cite[Corollary 3.3]{Si} and subsection 2.1.1 in 
\cite{LR}.

Continuing with the proof proper, take an arbitrary point $\zeta \in P$ and
keep it fixed. Note that, on account of the  completeness result formulated 
in Theorem~\ref{THEOREM:completeness}, the domain of the maximally defined 
integral curve \eqref{gamma_zeta_in_THM} does coincide with the set of real 
numbers. Utilizing this trajectory, below we shall introduce three 
time-dependent matrices as follows.

First, remembering the matrices $\cL$~\eqref{cL} and $\cA$~\eqref{cA}, let 
\be\label{cD_zeta}
    \cD_\zeta \colon \bR \rightarrow \bR^{N \times N},
    \quad
    t \mapsto \cD_\zeta(t)
\ee
be the (unique) maximal integral curve of the differential equation
\be\label{cD_DE}
    \dot{\cD}_\zeta(t) = - \cD_\zeta(t) 
                            \Big( 
                                \cA(\gamma_\zeta(t)) 
                                + \halfbeta \cL(\gamma_\zeta(t))^{-1} 
                            \Big)_0
\ee
with initial condition 
\be\label{cD_zeta_IC}
    \cD_\zeta(0) = \bsone_N.
\ee 
Since the coefficients in the above linear system are defined for all reals, 
the domain of the maximally defined solution $\cD_\zeta$ is indeed
$\bR$, as we anticipated in \eqref{cD_zeta}. Furthermore, since the 
coefficient matrix is diagonal, it is clear that $\cD_\zeta(t)$ is an 
\emph{invertible diagonal} matrix for all $t \in \bR$.

Second, keeping in mind $\cD_\zeta$~\eqref{cD_zeta} and $\cY$~\eqref{cY}, 
consider the maximally defined solution
\be\label{l_zeta}
    l_\zeta \colon \bR \rightarrow \bR^{N \times N},
    \quad
    t \mapsto l_\zeta(t)
\ee
of the differential equation
\be\label{l_DE}
    \dot{l}_\zeta(t) = 
        l_\zeta(t) \cD_\zeta(t) \cY(\gamma_\zeta(t))_{< 0} \cD_\zeta(t)^{-1},
\ee
satisfying the initial condition $l_\zeta(0) = \bsone_N$. 

Third, deploying $Q$~\eqref{Q}, too, let
\be\label{u_zeta}
    u_\zeta \colon \bR \rightarrow \bR^{N \times N},
    \quad
    t \mapsto u_\zeta(t)
\ee
be the maximal solution of the differential equation
\be\label{u_DE}
    \dot{u}_\zeta(t) = 
        e^{-Q(\gamma_\zeta(t)) 
        + Q(\zeta)} \cD_\zeta(t) \cY(\gamma_\zeta(t))_{> 0} 
            \cD_\zeta(t)^{-1} e^{Q(\gamma_\zeta(t)) - Q(\zeta)} u_\zeta(t),
\ee
subject to the initial condition $u_\zeta(0) = \bsone_N$.

Again, since the coefficients of the linear systems \eqref{l_DE} and 
\eqref{u_DE} are defined for all reals, so are the maximal solutions 
\eqref{l_zeta} and \eqref{u_zeta}. Notice also that the coefficient matrix 
in \eqref{l_DE} is strictly lower triangular, while the coefficient matrix 
in \eqref{u_DE} is strictly upper triangular. Though we are dealing here 
with linear systems of variable coefficients, the representation of the 
fundamental matrix by the Peano--Baker series ensures that 
$\forall t \in \bR$ the matrix $l_\zeta(t)$ is \emph{unit lower triangular}, 
whilst the matrix $u_\zeta(t)$ is \emph{unit upper triangular}. 
In particular, both $l_\zeta(t)$ and $u_\zeta(t)$ are invertible.

To proceed, for all $t \in \bR$ define the invertible lower triangular matrix
\be\label{rho}
    \rho_\zeta(t) = l_\zeta(t) \cD_\zeta(t) \in \bR^{N \times N},
\ee
and also introduce
\be\label{Phi}
    \Phi_\zeta(t) 
    = \rho_\zeta(t) \cY(\gamma_\zeta(t)) \rho_\zeta(t)^{-1}
        \in \bR^{N \times N}.
\ee
It is clear that the dependence of $\Phi_\zeta(t)$ on $t$ is smooth, and
Leibniz rule yields
\be\label{Phi_dot}
    \dot{\Phi}_\zeta(t) =
        \rho_\zeta(t) 
        \left( 
            (\cY \circ \gamma_\zeta)\spdot(t)
            + \COMM{\rho_\zeta(t)^{-1} \dot{\rho}_\zeta(t)}
                    {\cY(\gamma_\zeta(t))} 
        \right)
        \rho_\zeta(t)^{-1}.
\ee
Now, due to \eqref{cY_Lax_eqn}, for the time derivative of $\cY$ along the 
curve $\gamma_\zeta$ we can write
\be
    (\cY \circ \gamma_\zeta)\spdot(t) =
    (\bsX_H)_{\gamma_\zeta(t)}[\cY] =
    \COMM{\cA(\gamma_\zeta(t))}{\cY(\gamma_\zeta(t))},
\ee
whence \eqref{Phi_dot} entails
\be\label{Phi_dot_conj}
    \rho_\zeta(t)^{-1} \dot{\Phi}_\zeta(t) \rho_\zeta(t)
    = \COMM{\cA(\gamma_\zeta(t)) + \rho_\zeta(t)^{-1} \dot{\rho}_\zeta(t)}
            {\cY(\gamma_\zeta(t))}.
\ee
As for the derivative of \eqref{rho}, from the differential equations
\eqref{cD_DE} and \eqref{l_DE} we find
\be
\begin{split}
    \rho_\zeta(t)^{-1} \dot{\rho}_\zeta(t) 
    & = \cD_\zeta(t)^{-1} l_\zeta(t)^{-1} \dot{l}_\zeta(t) \cD_\zeta(t)
        + \cD_\zeta(t)^{-1} \dot{\cD}_\zeta(t) \\
    & = \cY(\gamma_\zeta(t))_{< 0} 
        - \Big( 
                \cA(\gamma_\zeta(t)) + \halfbeta \cL(\gamma_\zeta(t))^{-1} 
            \Big)_0.
\end{split}
\ee
Therefore, bringing into play the relations \eqref{cL&cA_rels} displayed in 
Lemma~\ref{LEMMA:cL&cA&Q&H}, we can write
\be
    \cA(\gamma_\zeta(t)) + \rho_\zeta(t)^{-1} \dot{\rho}_\zeta(t)
    = - \halfbeta \cL(\gamma_\zeta(t))^{-1}.
\ee
Plugging this formula back into \eqref{Phi_dot_conj}, from the definition
\eqref{cY} we conclude
\be
    \rho_\zeta(t)^{-1} \dot{\Phi}_\zeta(t) \rho_\zeta(t)
    = -\frac{\beta^2}{4}
        \COMM{ \cL(\gamma_\zeta(t))^{-1}}
                {\cL(\gamma_\zeta(t)) - \cL(\gamma_\zeta(t))^{-1}}
    = \bszero_N.
\ee
Now, looking back to \eqref{Phi}, it is also clear that
\be\label{Phi(0)}
    \Phi_\zeta(0) 
        = l_\zeta(0) \cD_\zeta(0) \cY(\gamma_\zeta(0)) 
            \cD_\zeta(0)^{-1} l_\zeta(0)^{-1}
        = \cY(\zeta).
\ee
The last two equations entail
\be\label{Phi_OK}
    \Phi_\zeta(t) = \cY(\zeta)
    \qquad
    (t \in \bR).
\ee
Combining this simple formula with the definition \eqref{Phi}, we end up with
\be\label{Phi_evolution}
    \cY(\gamma_\zeta(t)) = \rho_\zeta(t)^{-1} \cY(\zeta) \rho_\zeta(t)
    \qquad
    (t \in \bR).
\ee
This equation implies that the time evolution of $\cY$~\eqref{cY} 
along any trajectory is isospectral. Of course, this is what we expect from 
the Lax equation \eqref{cY_Lax_eqn}. However, the essential point here is that 
we have full control over $\rho_\zeta$~\eqref{rho} via the differential 
equations \eqref{cD_DE} and \eqref{l_DE}.

Next, for all $t \in \bR$ define the invertible matrix
\be\label{Psi}
    \Psi_\zeta(t) = l_\zeta(t) e^{Q(\gamma_\zeta(t)) - Q(\zeta)} u_\zeta(t)
        \in \bR^{N \times N}.
\ee
The dependence of $\Psi_\zeta(t)$ on $t$ is smooth, and straightforward 
calculations lead to the expression
\be
\begin{split}
    \dot{\Psi}_\zeta(t) \Psi_\zeta(t)^{-1} 
    & = \dot{l}_\zeta(t) l_\zeta(t)^{-1}
        + l_\zeta(t) \OD{\BR{Q(\gamma_\zeta(t)) - Q(\zeta)}}{t} 
            l_\zeta(t)^{-1} \\
    & \quad 
        + l_\zeta(t) e^{Q(\gamma_\zeta(t)) - Q(\zeta)} 
            \dot{u}_\zeta(t) u_\zeta(t)^{-1} 
            e^{-Q(\gamma_\zeta(t)) + Q(\zeta)} l_\zeta(t)^{-1}.
\end{split}
\ee
Now, remembering \eqref{cY&Q}, it is plain that
\be
    \OD{\BR{Q(\gamma_\zeta(t)) - Q(\zeta)}}{t} 
    = (\bsX_H)_{\gamma_\zeta(t)}[Q]
    = \cY(\gamma_\zeta(t))_0.
\ee
Thus, looking back to \eqref{l_DE}, \eqref{u_DE} and \eqref{rho}, from the
observation \eqref{Phi_evolution} we deduce
\be
\begin{split}
    \dot{\Psi}_\zeta(t) \Psi_\zeta(t)^{-1} 
    & = l_\zeta(t) 
        \big( \cD_\zeta(t) \cY(\gamma_\zeta(t))_{< 0} \cD_\zeta(t)^{-1}
                + \cY(\gamma_\zeta(t))_0 \\
    & \quad \quad \quad \quad
        + \cD_\zeta(t) \cY(\gamma_\zeta(t))_{> 0} \cD_\zeta(t)^{-1} \big)
        l_\zeta(t)^{-1} \\
    & = \rho_\zeta(t) \cY(\gamma_\zeta(t)) \rho_\zeta(t)^{-1}
        = \cY(\zeta).
\end{split}
\ee
Consequently, the function $\Psi_\zeta$~\eqref{Psi} obeys the differential 
equation
\be\label{Psi_DE}
    \dot{\Psi}_\zeta(t) = \cY(\zeta) \Psi_\zeta(t),
\ee
together with the initial condition 
\be
    \Psi_\zeta(0) 
    = l_\zeta(0) e^{Q(\gamma_\zeta(0)) - Q(\zeta)} u_\zeta(0) 
    = \bsone_N.
\ee
However, since \eqref{Psi_DE} is a homogeneous linear system with constant
coefficients, it can be solved by ordinary matrix exponentials. Indeed, 
the unique maximal solution of the above initial value problem has the form
\be\label{Psi&exp}
    \Psi_\zeta(t) 
    = e^{t \cY(\zeta)} 
    = \sum_{k = 0}^\infty \frac{t^k}{k!} \cY(\zeta)^k
    \qquad
    (t \in \bR).
\ee
Now, simply by comparing the above formula with \eqref{Psi}, the Theorem 
follows.
\end{proof}

Given an arbitrary $N \times N$ matrix 
\be\label{X_mat}
X = [X_{k, l}]_{1 \leq k, l \leq N} \in \bR^{N \times N},
\ee
for any $j \in \bN_N$ let $\pi_j(X) \in \bR$ denote its $j$th leading principal
minor; that is,
\be
    \pi_j(X) = \det([X_{k, l}]_{1 \leq k, l \leq j}).
\ee
Also, introduce the notations
\be\label{m_j}
    m_1(X) = \pi_1(X) = X_{1, 1}
    \midand
    m_j(X) = \frac{\pi_j(X)}{\pi_{j - 1}(X)}
    \qquad
    (2 \leq j \leq N).
\ee
Utilizing the above objects, we can draw important conclusions from 
the above Theorem. Indeed, the most important observation displayed
in \eqref{exp_flow} can be interpreted by saying that the matrix 
$e^{t \cY(\zeta)}$ has an $LDU$ factorization (or Gauss decomposition) 
for all $t \in \bR$. As is known from the theory of matrices (see e.g. 
\cite[Corollary 3.5.6]{HJ}), the factors appearing on the right hand 
side of \eqref{exp_flow} are unique and for the $c$th diagonal entry 
$(c \in \bN_n)$ of the diagonal factor $e^{Q(\gamma_\zeta(t)) - Q(\zeta)}$ 
we can write that
\be
    0 
    < e^{q_c(\gamma_\zeta(t)) - q_c(\zeta)} 
    = m_c \big( e^{t \cY(\zeta)} \big).
\ee
Thus, for the time evolution of the particle positions we obtain
\be
    q_c(\gamma_\zeta(t)) 
    = q_c(\zeta) + \ln\big( m_c \big( e^{t \cY(\zeta)} \big) \big),
\ee
whereas the time evolution of the rapidities can be recovered from the
relationships \eqref{bsX_H_q_1__in_q&theta}, \eqref{bsX_H_q_n__in_q&theta} 
and \eqref{bsX_H_q_a__in_q&theta}. To sum up, we see that finding the 
trajectories of the Hamiltonian dynamics generated by $H$~\eqref{H} boils 
down to the computation of the leading principal minors of the exponential 
matrix flow $t \mapsto e^{t \cY(\zeta)}$. In this sense our solution 
algorithm is of purely algebraic nature.

Note that the above discussion nicely harmonizes with Theorem 4.2 in 
\cite{R90}. Furthermore, similarly to the translation invariant case,
in our analysis we made critical use of the algebraic properties of 
the Lax pair, as formulated in Lemma~\ref{LEMMA:cL&cA&Q&H}. As a matter 
of fact, the real reason behind our choice of gauge for the Lax pair
is to maintain these algebraic relationships.

\section{Discussion}\label{SECTION:Discussion}
The main achievement of our work is the construction of a Lax representation 
for the dynamics generated by the van Diejen--Toda Hamiltonian \eqref{H}. 
By exploiting the algebraic properties of the proposed Lax pair, we could 
provide a solution algorithm, too. By the very nature of the subject, the 
presentation of the material has an inevitable algebraic flavor. However, 
even the algebraic part of the story may have further surprises in store. 
The Lax representation of the dynamics guarantees that the spectral invariants 
of Lax matrix $\cL$~\eqref{cL} are first integrals of the dynamics, but we 
are still in debt to prove that they are in involution. In the light of the 
earlier developments on Ruijsenaars' relativistic Toda chains, the most 
satisfactory step would be to provide an appropriate $r$-matrix structure 
for $\cL$. Indeed, for the translation invariant models Suris succeeded in 
casting the tensorial Poisson bracket of the Lax matrix into a Sklyanin 
bracket form. Of course, this result was instrumental in developing a 
geometric picture, too, in the framework of the Poisson--Lie groups 
(see \cite{S91}). 

Turning to questions requiring analytic considerations, it would 
be highly desirable to construct action-angle variables to the systems 
\eqref{H}. Besides Ruijsenaars' pioneering work \cite{R90}, in this respect
the paper \cite{CKA} also warrants mention, in which Suris' bidiagonal 
matrices featuring the Lax triad are directly utilized to construct 
action-angle map for the relativistic open Toda chains. Just as in 
\cite{R90}, solving this problem for our deformed systems could shed 
light on their scattering properties, too. Moreover, taking the lead of 
\cite{R90}, it would be an equally important task to uncover the dual 
systems associated with \eqref{H} in the sense of Ruijsenaars. Since the 
clarification of the above problems is unavoidable to complete the study 
of the dynamical systems \eqref{H}, we wish to come back to these issues 
in later publications.

However, the real challenge would be the construction of Lax 
matrices for the most general van Diejen--Toda systems. In this respect 
we must mention the closely related Ruijsenaars--Schnei\-der 
models \cite{RS, R88}, too. Indeed, one of the main themes of \cite{R90} 
is the transition from the hyperbolic Ruijsenaars--Schneider model to the 
relativistic Toda chain by taking the so-called `strong coupling limit'. 
In fact, the multi-parametric van Diejen--Toda chains were derived 
in a completely analogous manner from the elliptic and the hyperbolic 
van Diejen systems \cite{D94}. As concerns the construction of Lax
matrices for the van Diejen--Toda models along the same lines, for about 
two decades after their inception the main obstacle had been the very 
limited knowledge about the Lax representation of the classical van Diejen 
systems. Nevertheless, the situation has greatly improved in the last couple 
of years. In our paper \cite{P12} we provided a Lax representation for the 
rational van Diejen models with the maximal number of $3$ parameters. 
Built upon this development, we worked out a complete theory for certain 
$2$-parameter subfamily of hyperbolic van Diejen systems \cite{PG, P18}, 
too. However, in this research domain the real breakthrough is due to 
Chalykh. Indeed, among many other fascinating results, in the beautiful 
recent paper \cite{Chal} a quantum Lax matrix is constructed for the 
elliptic van Diejen system with $9$ coupling parameters.

At this point we must mention that our experience with the van Diejen 
models proved to be essential related to this paper, too. In fact, we 
arrived at our Lax matrix $\cL$~\eqref{L} by a laborious brute-force 
approach based upon a close inspection of a conjectured Lax matrix for 
certain $3$-parameter subfamily of hyperbolic van Diejen systems (see 
equation (6.5) in \cite{PG}). Fortunately, by borrowing ideas mainly 
from \cite{R90, S90}, our effort resulted in the Lax pair 
$(\cL, \cA)$ having nice properties: $\cL$ has lower bandwidth $2$, 
whereas $\cA$ is pentadiagonal for $\kappa > 0$. Therefore, it is an 
easy guess that such structured matrices will play role in the Lax 
representation of the most general van Diejen--Toda systems, too. 
However, remembering vividly our struggle with the $1$-parameter subfamily 
of deformed relativistic Toda systems \eqref{H}, we do not advocate the 
idea of a trial and error approach to construct Lax matrices with more 
coupling constants. Rather, a systematic approach is required, with more 
insight. As mentioned above, in this respect Chalykh's paper \cite{Chal} 
may come to our salvation. Indeed, by taking appropriate `strong coupling 
limits' of his Lax matrices, we expect that Lax representation will emerge 
for the most general van Diejen--Toda systems as well.

\medskip
\noindent
\textbf{Acknowledgments.}
Our work was supported by the J\'anos Bolyai Research Scholarship of the 
Hungarian Academy of Sciences, and by the \'UNKP-18-4 and the \'UNKP-19-4 
New National Excellence Programs of the Ministry of Human Capacities, Hungary. 
The support of the Ministry of Human Capacities, Hungary, by grant 
TUDFO/47138-1/2019-ITM is also greatly acknowledged; we wish to thank 
L.~Moln\'ar for the membership in his research group working on the project 
``Structures of matrices and operators and their applications''.


\end{document}